\newtheorem{proposition}{Proposition}
\newtheorem{remark}{Remark}
\begin{document}
\title{Throughput Maximization for Movable Antenna Systems with Movement Delay Consideration}
\author{
	Honghao~Wang,
	Qingqing~Wu,
	Ying~Gao,
	Wen~Chen,
	Weidong~Mei,
	Guojie~Hu,
	and~Lexi~Xu
	\thanks{H. Wang is with the Department of Electronic Engineering, Shanghai Jiao Tong University, Shanghai 200240, China, and also with the Department of Electrical and Computer Engineering, University of Macau, Macau 999078, China (e-mail: mc25018@um.edu.mo). Q. Wu, Y. Gao, and W. Chen are with the Department of Electronic Engineering, Shanghai Jiao Tong University, Shanghai 200240, China (e-mail: qingqingwu@sjtu.edu.cn; yinggao@sjtu.edu.cn; wenchen@sjtu.edu.cn). W. Mei is with the National Key Laboratory of Wireless Communications, University of Electronic Science and Technology of China, Chengdu 611731, China (e-mail: wmei@uestc.edu.cn). G. Hu is with the College of Communication Engineering, Rocket Force University of Engineering, Xi’an 710025, China (e-mail: lgdxhgj@sina.com). L. Xu is with the Research Institute, China United Network Communications Corporation, Beijing 100048, China (e-mail: davidlexi@hotmail.com).}
}

\maketitle

\begin{abstract}
In this paper, we model the minimum achievable throughput within a transmission block of restricted duration and aim to maximize it in movable antenna (MA)-enabled multiuser downlink communications. Particularly, we account for the antenna moving delay caused by mechanical movement, which has not been fully considered in previous studies, and reveal the trade-off between the delay and signal-to-interference-plus-noise ratio at users. To this end, we first consider a single-user setup to analyze the necessity of antenna movement. By quantizing the virtual angles of arrival, we derive the requisite region size for antenna moving, design the initial MA position, and elucidate the relationship between quantization resolution and moving region size. Furthermore, an efficient algorithm is developed to optimize MA position via successive convex approximation, which is subsequently extended to the general multiuser setup. Numerical results demonstrate that the proposed algorithms outperform fixed-position antenna schemes and existing ones without consideration of movement delay. Additionally, our algorithms exhibit excellent adaptability and stability across various transmission block durations and moving region sizes, and are robust to different antenna moving speeds. This allows the hardware cost of MA-aided systems to be reduced by employing low rotational speed motors.
\end{abstract}

\begin{IEEEkeywords}
	Movable antenna (MA), antenna moving delay, antenna position optimization, throughput maximization.
\end{IEEEkeywords}

\section{Introduction}\label{sec_intro}
The next generation of wireless communication systems demands a dramatic increase in capacity, transitioning from single-input single-output (SISO) to multiple-input multiple-output (MIMO). By leveraging the independent or quasi-independent fading characteristics of multipath components, MIMO systems can achieve substantial gains in beamforming, spatial multiplexing, and diversity, thereby enabling the simultaneous transmission of multiple data streams within the same time-frequency resource block. This capability markedly improves spectral efficiency compared to single-antenna systems. However, the implemented antenna schemes in existing systems are predominantly fixed-position antenna (FPA) \cite{WuQQ_IRS_active_passive_trans,GaoY_activeIRS_SWIPT,PengQY_HIRS_EE}, which restricts their diversity and spatial multiplexing performance due to the underutilization of channel variation in the continuous spatial field. This results from the limitation of static and discrete antenna deployment. To harness the extra spatial degree of freedom (DoF) in wireless channels, movable antenna (MA) \cite{ZhuLP_MA_model} and fluid antenna (FA) \cite{WongKK_FAS} were conceived as innovative technologies to overcome the inherent constraints of conventional FPAs. Relevant research interests are fueled by their unparalleled flexibility and reconfigurability, which provide substantial enhancements in system performance for wireless applications. Besides, the specific implementation of antenna movement is as follows: by connecting each MA to a radio frequency (RF) chain via a flexible cable, MA positions can be adjusted in real-time within a spatial region, driven by motors or servos \cite{Basbug_MA_hardware}. Thus, unlike FPAs that remain motionless and undergo random channel variations, MAs can be strategically placed at positions with more favorable channel conditions to improve the quality-of-service (QoS). In particular, MAs offer superior signal power enhancement, interference mitigation, flexible beamforming, and spatial multiplexing capabilities compared to FPAs.

Given the aforementioned advantages, MA has garnered substantial interest from the wireless communication research community. Initial explorations concentrated on point-to-point single-user MA-enabled systems. For example, the authors of \cite{ZhuLP_MA_model} introduced an innovative mechanical MA architecture and developed a field-response model for SISO systems incorporating transmit and receive MAs. Their analysis revealed significant signal-to-noise ratio (SNR) gains provided by a single receive MA compared to its FPA counterpart under both deterministic and stochastic channels. Building upon this foundation, the research in \cite{MaWY_MIMO_MA} examined an MA-enhanced point-to-point MIMO system, demonstrating that optimizing MA positions and transmit signal covariance matrix markedly improves the channel capacity. Further investigations in \cite{ChenXT_MA_statistical} and \cite{YeYQ_FAS_statistical} extended this research to scenarios with only statistical channel state information (CSI), proposing two simplified antenna movement modes with comparable performance but reduced complexity. Diverging from the narrow-band concerns, the authors of \cite{ZhuLP_wideband_MA} investigated the application of MAs in wideband communications, underscoring their potential for broader frequency bands.

Recent advancements have significantly broadened the scope of MA-aided systems to encompass more comprehensive multiuser communication scenarios, particularly emphasizing spectrum sharing, as discussed in \cite{GaoY_multicast_MA} and \cite{WangHH_interference_MA}. These works highlighted notable improvements in interference management achieved by MAs in multicast and interference networks, respectively. Beyond spectrum sharing, extensive research has been conducted on designing antenna movement schemes. For instance, the authors in \cite{ZhengZY_twotimescale_MA} and \cite{HuGJ_twotimescale_MA} proposed two-timescale designs for MA-enabled communications. These designs optimize MA positions over a large timescale based on statistical CSI while performing beamforming based on instantaneous CSI. This approach sacrifices part of the adjustment flexibility of MAs for reduced operational complexity. Additionally, other critical investigations have focused on classical multiuser uplink and downlink communications. For uplink communications, refer to \cite{ZhuLP_uplink_MA,HuGJ_power_MA,XiaoZY_MA_BS,LiNZ_NOMA_MA}, and for downlink communications, see \cite{QinHR_downlink_MA,HuGJ_CoMP_MA,MeiWD_Graph_MA,ChengZQ_FAS_sumrate,ZhangYC_MA_hybridBF,WuYF_MA_discrete,WengCH_MA_learning}. Moreover, the authors of \cite{LyuB_MA_symbiotic} and \cite{DingJZ_MA_FD_secure_1} explored MA-enabled symbiotic radio and full-duplex communications, respectively. Expanding beyond single-technique scenarios, the studies in \cite{GaoY_WPCN_MA} and \cite{WeiX_MA_RIS,SunYN_MA_RIS} evaluated the applications of MAs in conjunction with the wireless-powered communication network (WPCN) and intelligent reflecting surface (IRS), respectively. From another perspective, some works have delved into channel estimation for MA-aided communications, as evidenced by \cite{MaWY_channel_estimation_MA,XiaoZY_MA_channel_estimation,ZhangZJ_MA_channel_estimation}.

The majority of the extant works concentrated on the scenarios in which wireless channels manifest slow variations over time. This can be ascribed to the restricted mobility of wireless terminals, e.g., machine-type communication (MTC) users in automated industries \cite{ZhuLP_survey_MA}. Notably, these studies considered the transmission block with long duration, while the distances MAs move from their initial positions to the target positions are typically several wavelengths, making the antenna moving delay negligible relative to the whole transmission block. Nevertheless, although the condition of slow channel variation is fulfilled, it does not imply that the transmission duration is as prolonged as the channel stability span. In more prevalent cases, it might be necessary to communicate with multiple users sequentially within a period during which the channel can be assumed stable. This indicates that the transmission block assigned to each user is generally of short duration. Consequently, the antenna moving delay becomes a significant factor that is non-negligible to affect the communication performance. Unfortunately, this essential scenario has not been adequately tackled yet.

Building on previous discussions, contemporary antenna position optimization algorithms in \cite{MaWY_MIMO_MA,GaoY_multicast_MA,WangHH_interference_MA,ZhengZY_twotimescale_MA,HuGJ_twotimescale_MA,ZhuLP_wideband_MA,ZhuLP_uplink_MA,HuGJ_power_MA,XiaoZY_MA_BS,LiNZ_NOMA_MA,QinHR_downlink_MA,HuGJ_CoMP_MA,MeiWD_Graph_MA} aiming to enhance signal-to-interference-plus-noise ratio (SINR), maximize achievable rate, or minimize transmit power are not suitable for scenarios involving the short-duration transmission block. The reason is that the optimized MA positions derived from these algorithms may be significantly distant from the initial antenna position, causing the antenna moving delay to consume a substantial portion of transmission block duration, thereby severely diminishing communication throughput. Intuitively, a trade-off exists between antenna moving delay and users' SINR, as MA positions that yield a sub-optimal SINR but are closer to the initial position may result in optimal throughput. Thus, an urgent need is to develop MA position optimization algorithms that adapt to different transmission block durations. In this paper, we creatively take the antenna moving delay into account for the first time and pioneered the investigation of communication throughput in MA-enabled systems within a shorter transmission block duration.

In light of the above, this paper investigates an MA-aided multiuser multiple-input single-output (MISO) downlink communication system which is composed of a BS with $N$ FPAs and $K$ single-MA users, as shown in Fig. \ref{fig1_model}. Based on this, we comprehensively reveal the trade-off between antenna moving delay and users' SINR to achieve optimal communication throughput. The main contributions of this paper are summarized as follows:
\begin{itemize}
	\item To the best of the authors' knowledge, this is the first work in the literature to investigate the impact of antenna moving delay on the performance of MA-aided communications. We formulate a minimum achievable throughput maximization problem that jointly optimizes MA positions and transmit beamforming vectors under the constraints of transmission block duration, maximum transmit power at BS, and finite moving region for each MA.
	\item To handle this challenging problem, we first consider a simplified scenario with a single user. By leveraging the inherent characteristics of multipath channels and extra spatial DoFs that MA offers, we present the judgment and condition for whether the antenna at the user needs to move when the number of channel paths is one or two. By quantizing the virtual angles of arrival (AoAs) under multipath channels, we determine the requisite region size for antenna moving, design the initial MA position, and elucidate the relationship between quantization resolution and the moving region size. Furthermore, an effective algorithm is proposed based on successive convex approximation (SCA) for iteratively optimizing the MA position. Subsequently, we extend the proposed algorithm to the general case with multiple users by introducing slack variables.
	\item Numerical results validate that our algorithms outperform both the FPA scheme and the method outlined in \cite{GaoY_multicast_MA} in terms of achievable throughput under various setups. Moreover, the proposed algorithms remain superior across varying transmission block durations, effectively addressing the limitations of existing research in this domain. Additionally, our algorithms are robust to variations in antenna moving speed and drastically enhance performance stability for different region sizes wherein MAs move, with performance fluctuations of less than $3\%$. Both of these superiorities facilitate the simplification of MA hardware design.
\end{itemize}

\begin{figure*}[t]
	\centering
	\includegraphics[width=4.8in]{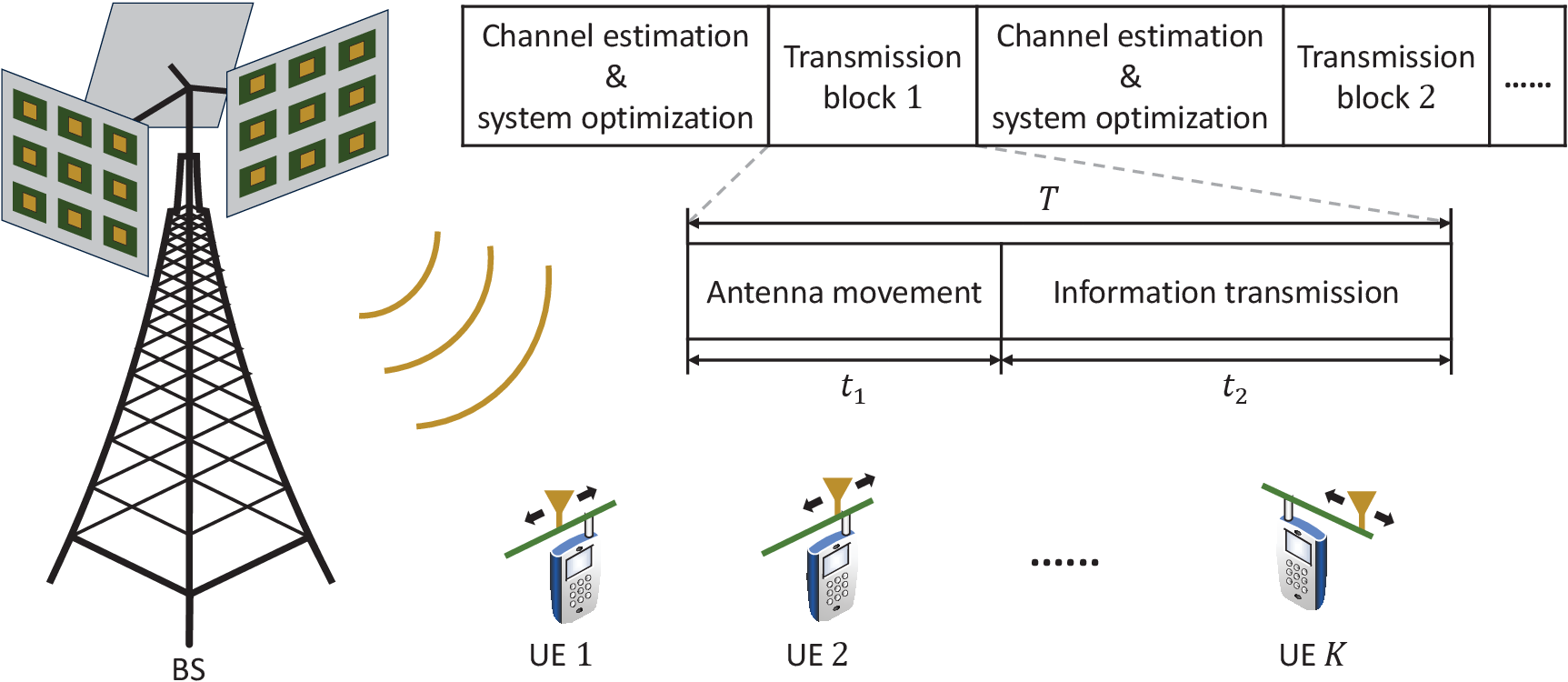}
	\vspace{-3pt}
	\caption{System model and transmission protocol of MA-enabled multiuser MISO downlink communications.}
	\label{fig1_model}
	\vspace{-6pt}
\end{figure*}

The rest of this paper is organized as follows. Section \ref{sec_model} introduces the system model and formulates the minimum achievable throughput maximization problem. Analysis of antenna movement and an efficient algorithm is proposed in Section \ref{section3} for addressing the simplified scenario with a single user, which is subsequently expanded to handle the general multiuser system in Section \ref{section4}. We evaluate the performance of our proposed algorithms via simulations in Section \ref{section5}. Finally, Section \ref{section6} concludes the paper.

\emph{Notations:} $\mathbb{R}$, $\mathbb{C}$, and $\mathbb{Z}$ denote the real space, complex space, and the set of integers, respectively. $\mathbb{C}^{M\times{N}}$ represents the space of $M\times{N}$ complex-valued matrices. For a complex-valued number $x$, $\left|x\right|$, $\operatorname{Re}\left\{x\right\}$, and $\angle{x}$ denote its modulus, real part, and phase argument, respectively. For a complex-valued vector $\mathbf{a}$, $\left|\left|\mathbf{a}\right|\right|$ and $\operatorname{diag}\left(\mathbf{a}\right)$ denote its Euclidean norm and diagonalization matrix, respectively. For a matrix $\mathbf{A}$ of arbitrary size, $\mathbf{A}\left(i,j\right)$, $\operatorname{rank}\left(\mathbf{A}\right)$, and $\operatorname{tr}\left(\mathbf{A}\right)$ represent its $\left(i,j\right)$-th element, rank, and trace, respectively. For two square matrices $\mathbf{W}_1$ and $\mathbf{W}_2$, $\mathbf{W}_1\succeq\mathbf{W}_2$ indicates that $\mathbf{W}_1-\mathbf{W}_2$ is positive semidefinite. The conjugate transpose operator is denoted by $\left(\cdot\right)^H$, while the expectation operator is represented by $\mathbb{E}\left\{\cdot\right\}$. $\mathcal{CN}\left(\mathbf{x},\mathbf{\Sigma}\right)$ represents a complex Gaussian distribution with a mean vector $\mathbf{x}$ and co-variance matrix $\mathbf{\Sigma}$.

\section{System Model and Problem Formulation}\label{sec_model}
As shown in Fig. \ref{fig1_model}, we consider a multiuser MISO downlink communication system in which a BS is equipped with $N$ rectangular FPAs and $K$ users are equipped with a single MA. Assuming that MAs connected to RF chains via flexible cables can move along linear arrays without restraint. Then, the coordinates of MAs for all users are denoted as $\mathbf{x}=\left[x_{1},\ldots,x_{K}\right]\in\mathbb{R}^{1{\times}K}$, where $x_{k}\in\mathcal{A}_k$ $\left(k\in\mathcal{K}=\left\{1,\ldots,K\right\}\right)$ represented by Cartesian coordinates represents the MA position of user $k$, and $\mathcal{A}_k$ is the given one-dimensional (1D) moving region with interval $\left[0,A_k\right]$. According to the field-response model \cite{ZhuLP_MA_model}, the channel vector $\mathbf{h}_{k}\in\mathbb{C}^{N\times1}$ from BS to user $k$ follows the structure as
\begin{equation}\label{eq_channel_1}
	\mathbf{h}_{k}=\mathbf{G}_{k}^H\mathbf{\Delta}_{k}\mathbf{f}_k\!\left(x_k\right).
\end{equation}
The terms appearing in \eqref{eq_channel_1} are defined as follows:
\begin{itemize}
	\item $\mathbf{f}_k\!\left(x_k\right)\overset{\triangle}{=}\left[e^{j\frac{2\pi}{\lambda}x_k\vartheta_{k,1}^r},\ldots,e^{j\frac{2\pi}{\lambda}x_k\vartheta_{k,L_k}^r}\right]^T$, where $L_{k}$ is the number of channel paths from the BS to user $k$, $\vartheta_{k,l}^r=\sin\theta_{k,l}^r\cos\phi_{k,l}^r,~l\in\left\{1,\ldots,L_k\right\}$ represents the virtual AoAs of the $l$-th receive path at user $k$, $\theta_{k,l}^r\in\left[0,\pi\right]$ and $\phi_{k,l}^r\in\left[0,\pi\right]$ denote the elevation and azimuth angles, respectively, and $\lambda$ is the wavelength.
	\item $\mathbf{\Delta}_{k}\overset{\triangle}{=}\operatorname{diag}\left\{\left[\tau_{k,1},\ldots,\tau_{k,L_{k}}\right]^H\right\}\in\mathbb{C}^{L_{k}\times{L_{k}}}$ indicates the path-response matrix (PRM) of user $k$, where $\tau_{k,l}$ is the complex response of the $l$-th channel path.
	\item $\mathbf{G}_{k}\overset{\triangle}{=}\left[\mathbf{g}_{k,1},\ldots,\mathbf{g}_{k,N}\right]\in\mathbb{C}^{L_{k}\times{N}}$ represents the transmit field-response matrix (FRM) at the BS, where $\mathbf{g}_{k,n}\in\mathbb{C}^{L_{k}\times1}$ is the transmit field-response vector (FRV) between the $n$-th transmit antenna and user $k$ $\left(n\in\left\{1,\ldots,N\right\}\right)$.
	\item $\mathbf{g}_{k,n}\overset{\triangle}{=}\left[e^{j\frac{2\pi}{\lambda}\mathbf{t}_{n}^T\mathbf{p}_{k,1}},\ldots,e^{j\frac{2\pi}{\lambda}\mathbf{t}_{n}^T\mathbf{p}_{k,L_k}}\right]^T$, where $\mathbf{t}_n\in\mathbb{R}^{2\times1}$ denotes the position of the $n$-th transmit antenna, $\mathbf{p}_{k,l}\overset{\triangle}{=}\left[\sin\theta_{k,l}^t\cos\phi_{k,l}^t,\cos\theta_{k,l}^t\right]^T$, $\theta_{k,l}^t\in\left[0,\pi\right]$ and $\phi_{kj,l}^t\in\left[0,\pi\right]$ are the elevation and azimuth angles of departure (AoDs) of the $l$-th transmit path from the BS to user $k$, respectively.
\end{itemize}
Based on the above, equation \eqref{eq_channel_1} can be reconstructed as
\begin{equation}\label{eq_channel_2}
	\mathbf{h}_{k}\!\left(x_k\right)=\left[\begin{matrix}
		\sum_{l=1}^{L_{k}}\tau_{k,l}^*e^{j\frac{2\pi}{\lambda}\left(x_k\vartheta_{k,l}^r-\mathbf{t}_1^T\mathbf{p}_{k,l}\right)}\\[-5pt]
		\vdots\\
		\sum_{l=1}^{L_{k}}\tau_{k,l}^*e^{j\frac{2\pi}{\lambda}\left(x_k\vartheta_{k,l}^r-\mathbf{t}_N^T\mathbf{p}_{k,l}\right)}
	\end{matrix}\right].
\end{equation}

Let $\mathbf{w}_k\in\mathbb{C}^{N\times1}$ denote the transmit beamforming vector for user $k$, the baseband complex signal received at user $k$ can be then expressed as
\begin{equation}\label{eq_rec}
	\small
	y_k=\sum_{j=1}^{K}\mathbf{h}_{k}^H\mathbf{w}_{j}s_j+z_k=\mathbf{h}_{k}^H\mathbf{w}_ks_k+\sum_{j\ne{k}}\mathbf{h}_{k}^H\mathbf{w}_{j}s_j+z_k,~\forall{k}\in\mathcal{K},
\end{equation}
where $s_k$ is the information-bearing symbol for user $k$ with normalized power, $z_k$ denotes the additive white Gaussian noise (AWGN) at user $k$, which is assumed to be circularly symmetric complex Gaussian (CSCG) distributed with zero mean and power $\sigma_k^2$, i.e., $z_k\sim\mathcal{CN}\left(0,\sigma_k^2\right)$. Then, the resulting SINR of user $k$ can be written as
\begin{equation}\label{eq_SINR}
	\gamma_k=\frac{\left|\mathbf{h}_{k}^H\mathbf{w}_k\right|^2}{\sum_{j=1,j\ne{k}}^K\left|\mathbf{h}_{k}^H\mathbf{w}_j\right|^2+\sigma_k^2},~\forall{k}\in\mathcal{K}.
\end{equation}

\begin{figure*}[hb]
	\newcounter{al1}
	\setcounter{al1}{\value{equation}}
	\setcounter{equation}{9}
	\hrulefill
	\begin{align}\label{eq_h^2}
		\left|\left|\mathbf{h}\right|\right|^2&=\mathbf{f}\!\left(x\right)^H\mathbf{\Delta}^H\mathbf{G}\mathbf{G}^H\mathbf{\Delta}\mathbf{f}\!\left(x\right)=\sum_{a=1}^{L-1}\sum_{b=a+1}^{L}2\operatorname{Re}\left\{\tau_a^*\tau_be^{j\frac{2\pi}{\lambda}\left(\vartheta_b^r-\vartheta_a^r\right)x}\sum_{n=1}^Ne^{j\frac{2\pi}{\lambda}\mathbf{t}_n^T\left(\mathbf{p}_a-\mathbf{p}_b\right)}\right\}+N\sum_{l=1}^{L}\left|\tau_l\right|^2\nonumber\\
		&=\sum_{a=1}^{L-1}\sum_{b=a+1}^{L}2\left|\tau_a^*\tau_b\sum_{n=1}^Ne^{j\frac{2\pi}{\lambda}\mathbf{t}_n^T\left(\mathbf{p}_a-\mathbf{p}_b\right)}\right|\cos\left(\frac{2\pi}{\lambda}\left(\vartheta_b^r-\vartheta_a^r\right)x+\angle\Big(\tau_a^*\tau_b\sum_{n=1}^Ne^{j\frac{2\pi}{\lambda}\mathbf{t}_n^T\left(\mathbf{p}_a-\mathbf{p}_b\right)}\Big)\right)+N\sum_{l=1}^{L}\left|\tau_l\right|^2.
	\end{align}
	\setcounter{equation}{\value{al1}}
\end{figure*}

The channel suffers from severe fluctuations when MA moves. If information transmission occurs during this period, it will result in poor QoS and thus low energy efficiency. Consequently, it is assumed that only basic signaling overhead is maintained, and no information transmission is carried out amid antenna movement. Hence, a transmission block with duration $T$ can be divided into two phases, i.e., antenna moving (AM) phase with delay $t_1$ and information transmission (IT) phase with duration $t_2$, $t_1+t_2=T$. During the AM phase, users move their antennas to positions with more favorable channel conditions to enhance the desired signal while mitigating interference. During the IT phase, users receive the data transmitted from BS aided by fixed MA. The AM delay $t_1$ is related to the distance and speed of antenna movement, which hereby can be expressed as
\begin{equation}\label{eq_t1}
	t_1=\underset{k\in\mathcal{K}}{\max}\left\{\frac{\left|x_k-x_k^0\right|}{v_k}\right\},
\end{equation}
where $x_k^0\in\mathcal{A}_k$ and $v_k$ denote the initial antenna position and antenna moving speed of user $k$, respectively. The farther the target antenna position is from the initial position, the longer the AM delay will be. This increased delay reduces the IT duration, potentially limiting communication throughput within the transmission block. Then, the achievable throughput $C_k$ (in bits/Hz) of user $k$ within the transmission block is given by\looseness=-1
\begin{equation}\label{eq_amount}
	C_k=t_2\log_2\left(1+\gamma_k\right),~\forall{k}\in\mathcal{K}.
\end{equation}

In this paper, to ensure the fairness of QoS for each user, we aim to maximize the minimum achievable throughput within the transmission block among all users by jointly optimizing AM delay and IT duration $\left\{t_1,t_2\right\}$, MA positions $\mathbf{x}$, and transmit beamforming vectors $\left\{\mathbf{w}_k\right\}_{k\in\mathcal{K}}$. According to \eqref{eq_SINR}-\eqref{eq_amount}, the optimization problem is formulated as follows:
\begin{subequations}\label{eq_P1}
	\begin{alignat}{2}
		\text{(P1)}:\quad&\underset{\mathbf{x},\left\{\mathbf{w}_k\right\}_{k\in\mathcal{K}}}{\max}~\underset{k\in\mathcal{K}}{\min}\quad&&\tilde{C}_k\label{eq_P1_a}\\
		&~\quad\mathrm{s.t.}&&\!\sum_{k=1}^{K}\left|\left|\mathbf{w}_k\right|\right|^2\le{P_{\text{m}}},\label{eq_P1_b}\\
		&&&x_k\in\mathcal{A}_k,~{\forall}k\in\mathcal{K},\label{eq_P1_c}
	\end{alignat}
\end{subequations}
where $\tilde{C}_k\overset{\triangle}{=}\left(\!T-\underset{k\in\mathcal{K}}{\max}\left\{\!\frac{\left|x_k-x_k^0\right|}{v_k}\!\right\}\!\right)\log_2\!\left(1\!+\!\gamma_k\right)$ is the reconstructed achievable throughput, $P_{\text{m}}>0$ in \eqref{eq_P1_b} represents the maximum transmit power of BS, and \eqref{eq_P1_c} ensures that MAs lie in the valid range. Despite the elimination of variables $\left\{t_1,t_2\right\}$, problem (P1) remains a non-convex optimization problem due to the intractable coupling of variables $\left\{\mathbf{x},\left\{\mathbf{w}_k\right\}_{k\in\mathcal{K}}\right\}$ in the objective function. To address this challenge, we first focus on the single-user system, laying the groundwork for tackling the more complex multiuser system. Specifically, in the next section, we present movement analyses considering different numbers of channel paths and propose an efficient algorithm to solve (P1) for the single-user setup. This approach is then extended to the multiuser setup in Section \ref{section4}.

\section{Single-User System}\label{section3}
In this section, we consider the single-user setup, i.e., $K=1$, to draw crucial insights into the joint MA position and transmit beamforming design. In this case, with the absence of inter-user interference and the elimination of differences in AM delays for each user, (P1) is thus simplified to (by omitting the user index $k$ for brevity)
\begin{subequations}\label{eq_P2}
	\begin{alignat}{2}
		\text{(P2)}:\quad&\underset{x,\mathbf{w}}{\max}\quad&&\left(T-\frac{\left|x-x^0\right|}{v}\right)\log_2\left(1+\frac{\left|\mathbf{h}^H\mathbf{w}\right|^2}{\sigma^2}\right)\label{eq_P2_a}\\
		&~\mathrm{s.t.}&&~\!\left|\left|\mathbf{w}\right|\right|^2\le{P_{\text{m}}},\label{eq_P2_b}\\
		&&&~x\in\mathcal{A}.\label{eq_P2_c}
	\end{alignat}
\end{subequations}
For any given MA positions $\mathbf{x}$, it is known that the item $\left|\mathbf{h}^H\mathbf{w}\right|^2$ can be maximized by adopting the maximum ratio transmission (MRT), i.e., $\mathbf{w}^\star=\sqrt{P_{\text{m}}}\frac{\mathbf{h}}{\left|\left|\mathbf{h}\right|\right|}$, which aligns all dimensions of $\mathbf{h}$ and is thus the optimal transmit beamforming. Then, problem (P2) is equivalently transformed into
\begin{subequations}\label{eq_P2.1}
	\begin{alignat}{2}
		\text{(P2.1)}:\quad&\underset{x}{\max}~&&\left(T\!-\!\frac{\left|x-x^0\right|}{v}\right)\!\log_2\!\left(1\!+\!\frac{P_{\text{m}}\left|\left|\mathbf{h}\right|\right|^2}{\sigma^2}\right)\label{eq_P2.1_a}\\
		&~\mathrm{s.t.}&&~\eqref{eq_P2_c}.\nonumber
	\end{alignat}
\end{subequations}
By substituting \eqref{eq_channel_1} into the item $\left|\left|\mathbf{h}\right|\right|^2$ in \eqref{eq_P2.1_a}, equation \eqref{eq_h^2} is derived as shown at the bottom of the page and problem (P2.1) can be further rewritten as
\setcounter{equation}{10}
\begin{subequations}\label{eq_P2.2}
	\begin{alignat}{2}
		\text{\rm(P2.2)}:\quad&\underset{x}{\max}\quad&&\tilde{C}\label{eq_P2.2_a}\\
		&~\mathrm{s.t.}&&\eqref{eq_P2_c},\nonumber
	\end{alignat}
\end{subequations}
where we define
{\footnotesize
\begin{equation}
	\tilde{C}\overset{\triangle}{=}\tilde{T}\log_2\!\left(\!1\!+\!P_{\text{m}}\frac{\sum_{a=1}^{L\!-\!1}\!\sum_{b=a\!+\!1}^{L}\!2\left|\mathcal{F}_{ab}\right|\cos\!\left(\!\frac{2\pi}{\lambda}\tilde{\theta}_{ba}^rx\!+\!\angle\mathcal{F}_{ab}\!\right)\!+\hspace{-1pt}\mathcal{G}}{\sigma^2}\right)\!,\nonumber
\end{equation}}
\vspace{-12pt}
{\small
\begin{align}
	&\hspace{-55pt}\mathcal{F}_{ab}\overset{\triangle}{=}\tau_a^*\tau_b\sum_{n=1}^Ne^{j\frac{2\pi}{\lambda}\mathbf{t}_n^T\left(\mathbf{p}_a-\mathbf{p}_b\right)},~a,b\in\left\{1,\ldots,L\right\},\nonumber\\
	&\hspace{-54pt}\tilde{\theta}_{ab}^r\overset{\triangle}{=}\vartheta_a^r-\vartheta_b^r,~a,b\in\left\{1,\ldots,L\right\},\nonumber\\
	&\hspace{-54pt}\tilde{T}\overset{\triangle}{=}T-\frac{\left|x-x^0\right|}{v},~\mathcal{G}\overset{\triangle}{=}N\sum_{l=1}^{L}\left|\tau_l\right|^2.\nonumber
\end{align}}

\vspace{-15pt}
\subsection{When to Move?}\label{subsec_analysis}
In the following, we give the judgment and condition of whether the antenna at the user needs to move when the number of channel paths is one or two. In the case of multiple channel paths, we derive the requisite region size for MA moving by assuming quantized virtual AoAs. This particular region may not align with the previous $\mathcal{A}$ with interval $\left[0,A\right]$, however, (P2.1) can be optimally solved within this region, where $A$ denotes the length of the initial moving region. Additionally, insightful analyses on the initial antenna position $x^0$ design and the relationship between quantization resolution and moving region size are provided. Furthermore, an efficient algorithm is proposed to solve the antenna position optimization problem (P2.2).

\begin{figure*}[!b]
	\vspace{-5pt}
	\hrulefill
	\newcounter{al2}
	\setcounter{al2}{\value{equation}}
	\setcounter{equation}{16}
	{\footnotesize
		\begin{align}
			&\frac{\operatorname{d}y_1\!\left(x\right)}{\operatorname{d}x}\!=\!-\frac{1}{v}\log_2\!\left(\!1\!+\!\frac{P_{\text{m}}\mathcal{G}}{\sigma^2}\!+\!\frac{2P_{\text{m}}}{\sigma^2}\!\left|\mathcal{F}_{12}\right|\!\cos\!\left(\!\frac{2\pi}{\lambda}\tilde{\theta}_{21}^rx\!+\!\angle\mathcal{F}_{12}\!\right)\!\right)\!-\!\left(\!T\!-\!\frac{x\!-\!x^0}{v}\!\right)\!\frac{4{\pi}P_{\text{m}}\left|\mathcal{F}_{12}\right|\tilde{\theta}_{21}^r\sin\left(\frac{2\pi}{\lambda}\tilde{\theta}_{21}^rx+\angle\mathcal{F}_{12}\right)}{\sigma^2\lambda\ln\!2\left(\!1\!+\!\frac{P_{\text{m}}\mathcal{G}}{\sigma^2}\!+\!\frac{2P_{\text{m}}}{\sigma^2}\!\left|\mathcal{F}_{12}\right|\!\cos\!\left(\!\frac{2\pi}{\lambda}\tilde{\theta}_{21}^rx\!+\!\angle\mathcal{F}_{12}\!\right)\!\right)},~\text{for}~x\ge{x^0},\label{eq_deriv_ge}\\
			&\frac{\operatorname{d}y_1\!\left(x\right)}{\operatorname{d}x}\!=\!\frac{1}{v}\log_2\!\left(\!1\!+\!\frac{P_{\text{m}}\mathcal{G}}{\sigma^2}\!+\!\frac{2P_{\text{m}}}{\sigma^2}\!\left|\mathcal{F}_{12}\right|\!\cos\!\left(\!\frac{2\pi}{\lambda}\tilde{\theta}_{21}^rx\!+\!\angle\mathcal{F}_{12}\!\right)\!\right)\!-\!\left(\!T\!-\!\frac{x^0\!-\!x}{v}\!\right)\!\frac{4{\pi}P_{\text{m}}\left|\mathcal{F}_{12}\right|\tilde{\theta}_{21}^r\sin\left(\frac{2\pi}{\lambda}\tilde{\theta}_{21}^rx+\angle\mathcal{F}_{12}\right)}{\sigma^2\lambda\ln\!2\left(\!1\!+\!\frac{P_{\text{m}}\mathcal{G}}{\sigma^2}\!+\!\frac{2P_{\text{m}}}{\sigma^2}\!\left|\mathcal{F}_{12}\right|\!\cos\!\left(\!\frac{2\pi}{\lambda}\tilde{\theta}_{21}^rx\!+\!\angle\mathcal{F}_{12}\!\right)\!\right)},~\text{for}~x<{x^0}.\label{eq_deriv_le}
	\end{align}}%
	\setcounter{equation}{\value{al2}}
\end{figure*}

\subsubsection{One-Path Case}
For the light-of-sight (LoS) channel, i.e., $L=1$, the objective function of (P2.2) is degenerated into\looseness=-1
\begin{equation}
	\small
	\tilde{C}_{L=1}=\left(T-\frac{\left|x-x^0\right|}{v}\right)\log_2\left(1+\frac{P_{\text{m}}\mathcal{G}}{\sigma^2}\right).
\end{equation}
In this case, the optimal MA position is the initial antenna coordinate $x^0$, i.e., antenna movement does not bring any channel gain but consumes time and thus reduces throughput.

\subsubsection{Two-Path Case}
If two channel paths with different virtual AoAs arrive at the user, the objective function of (P2.2) denoted by $y_1\!\left(x\right)$ is re-expressed as

\vspace{-10pt}
{\small
\begin{align}\label{eq_C_2path}
	&y_1\!\left(x\right)\overset{\triangle}{=}\tilde{C}_{L=2}\nonumber\\
	&=\tilde{T}\log_2\!\left(1+\frac{P_{\text{m}}\mathcal{G}}{\sigma^2}+\frac{2P_{\text{m}}}{\sigma^2}\left|\mathcal{F}_{12}\right|\cos\!\left(\frac{2\pi}{\lambda}\tilde{\theta}_{21}^rx\!+\!\angle\mathcal{F}_{12}\right)\!\right).
\end{align}}%

\begin{proposition}
	The maximum achievable throughput is attained at the initial antenna coordinate $x^0$ if the following sufficient conditions are met:
	
	\vspace{-8pt}
	{\small
		\begin{align}
			&\left\{\begin{aligned}
				\frac{A\tilde{\theta}_{21}^r}{\lambda}\!+\!\frac{\angle\mathcal{F}_{12}}{2\pi}\!-\!\frac{1}{2}\le{d_1}\le\frac{x^0\tilde{\theta}_{21}^r}{\lambda}\!+\!\frac{\angle\mathcal{F}_{12}}{2\pi},\vartheta_1^r<\vartheta_2^r,\\
				\frac{x^0\tilde{\theta}_{21}^r}{\lambda}\!+\!\frac{\angle\mathcal{F}_{12}}{2\pi}\le{d_1}\le\frac{A\tilde{\theta}_{21}^r}{\lambda}\!+\!\frac{\angle\mathcal{F}_{12}}{2\pi}\!+\!\frac{1}{2},\vartheta_1^r>\vartheta_2^r,
			\end{aligned}\right.\exists\hspace{1pt}d_1\in\mathbb{Z},\label{eq_yx_dec}\\[-3pt]
			&\left\{\begin{aligned}
				\frac{x^0\tilde{\theta}_{21}^r}{\lambda}+\frac{\angle\mathcal{F}_{12}}{2\pi}\le{d_2}\le\frac{\angle\mathcal{F}_{12}}{2\pi}+\frac{1}{2},~\vartheta_1^r<\vartheta_2^r,\\
				\frac{\angle\mathcal{F}_{12}}{2\pi}-\frac{1}{2}\le{d_2}\le\frac{x^0\tilde{\theta}_{21}^r}{\lambda}+\frac{\angle\mathcal{F}_{12}}{2\pi},~\vartheta_1^r>\vartheta_2^r,
			\end{aligned}\right.\exists\hspace{1pt}d_2\in\mathbb{Z},\label{eq_yx_inc}
	\end{align}}%
	indicating that antenna position optimization is unnecessary. Conversely, optimizing the MA position should be considered if the following inequality holds:
	\begin{equation}\label{eq_2path}
		A\left|\tilde{\theta}_{21}^r\right|>\lambda.
	\end{equation}
\end{proposition}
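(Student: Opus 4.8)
The plan is to treat the objective $y_1(x)$ in \eqref{eq_C_2path} as a product of two structurally different factors: the ``tent'' function $\tilde{T}=T-|x-x^0|/v$, which is piecewise linear, peaked at $x^0$, and strictly decreasing in $|x-x^0|$; and the log-SNR factor, whose only $x$-dependence enters through $\cos\!\left(\frac{2\pi}{\lambda}\tilde{\theta}_{21}^r x+\angle\mathcal{F}_{12}\right)$ and which is therefore periodic in $x$ with period $\lambda/|\tilde{\theta}_{21}^r|$. To certify that $x^0$ is a global maximizer, I would show that $y_1$ is nondecreasing on $[0,x^0]$ and nonincreasing on $[x^0,A]$, working directly from the one-sided derivatives already displayed in \eqref{eq_deriv_ge}--\eqref{eq_deriv_le}.

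For the sufficient conditions \eqref{eq_yx_dec}--\eqref{eq_yx_inc}, the key observation is that the first term of each derivative has a fixed sign. Since $\mathcal{G}\ge 2|\mathcal{F}_{12}|$ by the AM--GM inequality, the log-argument never drops below $1$, so $\log_2(\cdots)\ge 0$; hence the leading term is $\le 0$ for $x\ge x^0$ in \eqref{eq_deriv_ge} and $\ge 0$ for $x<x^0$ in \eqref{eq_deriv_le}. A clean \emph{sufficient} way to force the correct monotonicity is then to ensure the remaining sine-bearing term does not overturn this sign, i.e. to require $\tilde{\theta}_{21}^r\sin\!\left(\frac{2\pi}{\lambda}\tilde{\theta}_{21}^r x+\angle\mathcal{F}_{12}\right)\ge 0$ on $[x^0,A]$ and $\le 0$ on $[0,x^0]$, all other factors ($\tilde{T}$, the denominator, and the positive constants) being strictly positive. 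I would then translate each sign requirement into the statement that the phase $\frac{2\pi}{\lambda}\tilde{\theta}_{21}^r x+\angle\mathcal{F}_{12}$ stays within a single half-period of the sine over the relevant interval. Because the phase is monotone in $x$ (increasing when $\vartheta_1^r<\vartheta_2^r$, decreasing when $\vartheta_1^r>\vartheta_2^r$), this containment reduces to evaluating the phase at the two endpoints and requiring both to lie in $[2\pi d,2\pi d+\pi]$ on the right interval or in $[2\pi d-\pi,2\pi d]$ on the left interval for a common integer $d$; dividing through by $2\pi$ and renaming the integers $d_1,d_2$ yields exactly the endpoint inequalities in \eqref{eq_yx_dec} and \eqref{eq_yx_inc}, with the two branches corresponding to the sign of $\tilde{\theta}_{21}^r$.

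For the converse \eqref{eq_2path}, I would argue by a phase-sweep/contrapositive. The half-period windows demanded above have phase length $\pi$ each, so the feasibility intervals for $d_1$ and $d_2$ are nonempty only if the phase increments over $[x^0,A]$ and $[0,x^0]$ are each at most $\pi$, i.e. $|\tilde{\theta}_{21}^r|(A-x^0)\le\lambda/2$ and $|\tilde{\theta}_{21}^r|x^0\le\lambda/2$. Adding these forces $A|\tilde{\theta}_{21}^r|\le\lambda$, so whenever $A|\tilde{\theta}_{21}^r|>\lambda$ at least one window is empty and the sufficient no-move conditions \eqref{eq_yx_dec}--\eqref{eq_yx_inc} cannot hold. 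Moreover, a phase sweep exceeding a full period guarantees that $\cos(\cdots)$ attains its maximum $+1$ at some interior $x^\star$, where $\|\mathbf{h}\|^2=\mathcal{G}+2|\mathcal{F}_{12}|$ generically exceeds its value at $x^0$, so a strictly larger channel gain is reachable and the delay--SINR trade-off becomes nontrivial, justifying position optimization.

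The step I expect to be the main obstacle is the monotonicity translation in the second paragraph: one must verify that ``$\sin$ keeps a fixed sign on an interval'' is equivalent to ``the monotone phase stays inside one half-period,'' which hinges on the phase being strictly monotone (valid as long as $\tilde{\theta}_{21}^r\ne 0$) and on correctly matching the half-period to the sign of $\tilde{\theta}_{21}^r$ so that the two branches of \eqref{eq_yx_dec}--\eqref{eq_yx_inc} emerge with the right orientation. A secondary subtlety is to stress that \eqref{eq_yx_dec}--\eqref{eq_yx_inc} are only sufficient certificates---the dominant first term may already secure monotonicity even when the sine term carries the ``wrong'' sign---so that \eqref{eq_2path} should be read as the failure of these certificates rather than as a strict necessary condition for $x^0$ to be suboptimal.
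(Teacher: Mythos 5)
Your proposal is correct and follows essentially the same route as the paper's proof: sign analysis of the one-sided derivatives \eqref{eq_deriv_ge}--\eqref{eq_deriv_le}, reduction to the sufficient condition $\tilde{\theta}_{21}^r\sin\!\left(\frac{2\pi}{\lambda}\tilde{\theta}_{21}^r x+\angle\mathcal{F}_{12}\right)\ge 0$ (resp.\ $\le 0$) on $\left[x^0,A\right]$ (resp.\ $\left[0,x^0\right)$), endpoint translation of the monotone phase into the integer-window conditions \eqref{eq_yx_dec}--\eqref{eq_yx_inc}, and combination of the resulting constraints on $x^0$ to force $A\left|\tilde{\theta}_{21}^r\right|\le\lambda$, whose contrapositive yields \eqref{eq_2path}. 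Your explicit AM--GM verification that $\mathcal{G}\ge 2\left|\mathcal{F}_{12}\right|$ (so the logarithmic term is nonnegative and the denominator positive) merely makes rigorous a positivity step the paper leaves implicit.
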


\begin{proof}
	To analyze the variation of $y_1\!\left(x\right)$ in the receive region, the gradient of which with respect to (w.r.t.) MA position $x$ is derived in \eqref{eq_deriv_ge} and \eqref{eq_deriv_le} as shown at the bottom of the page. For the case of $x\ge{x^0}$, the sufficient condition that equation \eqref{eq_deriv_ge} is always less than $0$ is given by
	\setcounter{equation}{18}
	\begin{equation}
		\small
		\tilde{\theta}_{21}^r\sin\left(\frac{2\pi}{\lambda}\tilde{\theta}_{21}^rx+\angle\mathcal{F}_{12}\right)\ge0,
		\vspace{-3pt}
	\end{equation}
	i.e.,
	{\small
	\begin{align}\label{eq_deri_1}
		\left\{\begin{aligned}
			&\!\frac{\lambda}{\tilde{\theta}_{21}^r}\!\!\left(\!d_1\!-\!\frac{\angle\mathcal{F}_{12}}{2\pi}\!\right)\!\le{x}\le\!\frac{\lambda}{\tilde{\theta}_{21}^r}\!\!\left(\!d_1\!+\!\frac{1}{2}\!-\!\frac{\angle\mathcal{F}_{12}}{2\pi}\!\right)\!,\vartheta_1^r\!<\!\vartheta_2^r,\exists\hspace{1pt}d_1\!\in\!\mathbb{Z},\\
			&\!\frac{\lambda}{\tilde{\theta}_{21}^r}\!\!\left(\!d_1\!-\!\frac{\angle\mathcal{F}_{12}}{2\pi}\!\right)\!\le{x}\le\!\frac{\lambda}{\tilde{\theta}_{21}^r}\!\!\left(\!d_1\!-\!\frac{1}{2}\!-\!\frac{\angle\mathcal{F}_{12}}{2\pi}\!\right)\!,\vartheta_1^r\!>\!\vartheta_2^r,\exists\hspace{1pt}d_1\!\in\!\mathbb{Z}.
		\end{aligned}\right.
	\end{align}}%
	If there exists an integer $d_1$ such that the following inequalities hold, the achievable throughput $y_1\!\left(x\right)$ decreases as $x$ increases within the receive region $\left[x^0,A\right]$:
	
	\vspace{-10pt}
	{\small
	\begin{align}
		\left\{\begin{aligned}
			&\frac{A\tilde{\theta}_{21}^r}{\lambda}+\frac{\angle\mathcal{F}_{12}}{2\pi}-\frac{1}{2}\le{d_1}\le\frac{x^0\tilde{\theta}_{21}^r}{\lambda}+\frac{\angle\mathcal{F}_{12}}{2\pi},~\vartheta_1^r<\vartheta_2^r,\\
			&\frac{x^0\tilde{\theta}_{21}^r}{\lambda}+\frac{\angle\mathcal{F}_{12}}{2\pi}\le{d_1}\le\frac{A\tilde{\theta}_{21}^r}{\lambda}+\frac{\angle\mathcal{F}_{12}}{2\pi}+\frac{1}{2},~\vartheta_1^r>\vartheta_2^r,
		\end{aligned}\right.
	\end{align}}%
	which indicates that any antenna movement will lead to a reduction in system throughput. Moreover, $x^0$ needs to satisfy
	\vspace{-5pt}
	\begin{equation}\label{eq_x0_ge}
		\small
		\max\left\{A-\frac{\lambda}{2\left|\tilde{\theta}_{21}^r\right|},0\right\}\le{x^0}\le{A}.
		\vspace{-2pt}
	\end{equation}
	
	For the case of $x<x^0$, the sufficient condition that equation \eqref{eq_deriv_le} is always greater than $0$ is given by
	\vspace{-1pt}
	\begin{equation}
		\small
		\tilde{\theta}_{21}^r\sin\left(\frac{2\pi}{\lambda}\tilde{\theta}_{21}^rx+\angle\mathcal{F}_{12}\right)\le0,
		\vspace{-3pt}
	\end{equation}
	i.e., 
	{\small
	\begin{align}\label{eq_deri_2}
		\!\left\{\begin{aligned}
			&\!\frac{\lambda}{\tilde{\theta}_{21}^r}\!\!\left(\!d_2\!-\!\frac{1}{2}\!-\!\frac{\angle\mathcal{F}_{12}}{2\pi}\!\right)\!\le{x}\le\!\frac{\lambda}{\tilde{\theta}_{21}^r}\!\!\left(\!d_2\!-\!\frac{\angle\mathcal{F}_{12}}{2\pi}\!\right)\!,\vartheta_1^r\!<\!\vartheta_2^r,\exists\hspace{1pt}d_2\!\in\!\mathbb{Z},\\
			&\!\frac{\lambda}{\tilde{\theta}_{21}^r}\!\!\left(\!d_2\!+\!\frac{1}{2}\!-\!\frac{\angle\mathcal{F}_{12}}{2\pi}\!\right)\!\le{x}\le\!\frac{\lambda}{\tilde{\theta}_{21}^r}\!\!\left(\!d_2\!-\!\frac{\angle\mathcal{F}_{12}}{2\pi}\!\right)\!,\vartheta_1^r\!>\!\vartheta_2^r,\exists\hspace{1pt}d_2\!\in\!\mathbb{Z}.
		\end{aligned}\right.
	\end{align}}%
	If there exists an integer $d_2$ such that the following inequalities hold, the achievable throughput $y_1\!\left(x\right)$ increases with $x$ within the receive region $\left[0,x^0\right)$:
	
	\vspace{-10pt}
	{\small
	\begin{align}
		\left\{\begin{aligned}
			\frac{x^0\tilde{\theta}_{21}^r}{\lambda}+\frac{\angle\mathcal{F}_{12}}{2\pi}\le{d_2}\le\frac{\angle\mathcal{F}_{12}}{2\pi}+\frac{1}{2},~\vartheta_1^r<\vartheta_2^r,\\
			\frac{\angle\mathcal{F}_{12}}{2\pi}-\frac{1}{2}\le{d_2}\le\frac{x^0\tilde{\theta}_{21}^r}{\lambda}+\frac{\angle\mathcal{F}_{12}}{2\pi},~\vartheta_1^r>\vartheta_2^r,
		\end{aligned}\right.
	\end{align}}%
	which also indicates that any movement of MA will deteriorate the system throughput. Similarly, $x^0$ needs to meet
	\vspace{-2pt}
	\begin{equation}\label{eq_x0_le}
		\small
		0\le{x^0}\le\min\left\{\frac{\lambda}{2\left|\tilde{\theta}_{21}^r\right|},A\right\}.
		\vspace{-1pt}
	\end{equation}
	
	In summary, if there exist integers $d_1$ and $d_2$ satisfying inequalities \eqref{eq_yx_dec} and \eqref{eq_yx_inc}, the optimal MA position is the initial antenna coordinate $x^0$, i.e., the antenna can not be moved. Furthermore, combining \eqref{eq_x0_ge} and \eqref{eq_x0_le} yields the following inequality that the initial antenna position should satisfy:\looseness=-1
	\vspace{-1pt}
	\begin{equation}
		\small
		\max\left\{A-\frac{\lambda}{2\left|\tilde{\theta}_{21}^r\right|},0\right\}\le{x^0}\le\min\left\{\frac{\lambda}{2\left|\tilde{\theta}_{21}^r\right|},A\right\}.
	\end{equation}
	The maximum size $A$ of moving region and the virtual AoAs $\vartheta_1^r$ and $\vartheta_2^r$ at the user shall meet
	\vspace{-1pt}
	\begin{equation}
		A\left|\tilde{\theta}_{21}^r\right|\le\lambda,
	\end{equation}
	which completes the proof.
\end{proof}

\subsubsection{Multiple-Path Case}
If $L\ge2$, the objective function of (P2.1) can be rewritten based on equation \eqref{eq_channel_2} as follows:
\begin{equation}\label{eq_C_multi}
	\small
	\tilde{C}_{L\ge2}=\tilde{T}\log_2\!\left(\!1\!+\!\frac{P_{\text{m}}}{\sigma^2}\sum_{n=1}^{N}\left|\sum_{l=1}^{L_k}\tau_l^*e^{j\frac{2\pi}{\lambda}\left(x\vartheta_l^r-\mathbf{t}_n^T\mathbf{p}_l\right)}\right|^2\right),
\end{equation}
where denote the term of channel power gain as $y_2\!\left(x\right)$, i.e.,
\begin{equation}
	\small
	y_2\!\left(x\right)\overset{\triangle}{=}\sum_{n=1}^{N}\left|\sum_{l=1}^{L_k}\tau_l^*e^{j\frac{2\pi}{\lambda}\left(x\vartheta_l^r-\mathbf{t}_n^T\mathbf{p}_l\right)}\right|^2\!\!=\sum_{n=1}^{N}\left|\sum_{l=1}^{L_k}\tilde{\tau}_{n,l}e^{j\frac{2\pi}{\lambda}x\vartheta_l^r}\right|^2,
\end{equation}
where $\tilde{\tau}_{n,l}\overset{\triangle}{=}\tau_l^*e^{-j\frac{2\pi}{\lambda}\mathbf{t}_n^T\mathbf{p}_l}$. Since $\tilde{T}$ is a positive concave function symmetric w.r.t. $x=x^0$, the region where the optimal MA position lies can be determined if the period of $y_2\!\left(x\right)$ is obtained. However, it is difficult to explicitly represent this period for $L\ge3$ as the AoAs $\theta_l^r$ and $\phi_l^r$ at the user are randomly distributed. Nevertheless, the approximation of this period can be derived by assuming quantized virtual AoAs. Specifically, let $X$ denote the period for $y_2\!\left(x\right)$, we have
\begin{align}\label{eq_y2_period}
	&y_2\!\left(x\right)\equiv{y_2}\!\left(x+X\right)\Leftrightarrow\nonumber\\
	&\sum_{n=1}^{N}\left|\sum_{l=1}^{L_k}\tilde{\tau}_{n,l}e^{j\frac{2\pi}{\lambda}x\vartheta_l^r}\right|^2{\equiv}~\sum_{n=1}^{N}\left|\sum_{l=1}^{L_k}\tilde{\tau}_{n,l}e^{j\frac{2\pi}{\lambda}\left(x+X\right)\vartheta_l^r}\right|^2.
\end{align}
Then, we consider the following sufficient conditions for \eqref{eq_y2_period} to hold:
\begin{align}\label{eq_suff_multi}
&\left|\sum_{l=1}^{L_k}\tilde{\tau}_{n,l}e^{j\frac{2\pi}{\lambda}x\vartheta_l^r}\right|^2{\equiv}~\left|\sum_{l=1}^{L_k}\tilde{\tau}_{n,l}e^{j\frac{2\pi}{\lambda}\left(x+X\right)\vartheta_l^r}\right|^2,\nonumber\\
\Leftrightarrow&\sum_{a=1}^{L}\sum_{b=1}^{L}\tilde{\tau}_{n,a}^*\tilde{\tau}_{n,b}e^{j\frac{2\pi}{\lambda}\tilde{\theta}_{ba}^rx}\equiv\sum_{a=1}^{L}\sum_{b=1}^{L}\tilde{\tau}_{n,a}^*\tilde{\tau}_{n,b}e^{j\frac{2\pi}{\lambda}\tilde{\theta}_{ba}^r\left(x+X\right)},\nonumber\\
\Leftrightarrow&\sum_{a=1}^{L}\sum_{b=1,b\ne{a}}^{L}\tilde{\tau}_{n,a}^*\tilde{\tau}_{n,b}\left(1-e^{j\frac{2\pi}{\lambda}\tilde{\theta}_{ab}^rX}\right)e^{j\frac{2\pi}{\lambda}\tilde{\theta}_{ab}^rx}\equiv0,\nonumber\\
\Leftrightarrow&~1-e^{j\frac{2\pi}{\lambda}\tilde{\theta}_{ab}^rX}\equiv0,~a,b\in\left\{1,\ldots,L\right\},\nonumber\\
\Leftrightarrow&~\frac{X}{\lambda}\tilde{\theta}_{ab}^r\in{\mathbb{Z}},~a,b\in\left\{1,\ldots,L\right\}.
\end{align}
Accordingly, it is noted that the period $X$ is the minimum positive number that ensures $\frac{X}{\lambda}\tilde{\theta}_{ab}^r$ to be an integer for $\forall a,b\in\left\{1,\ldots,L\right\}$. To facilitate the analysis on the region where the optimal solution to (P2.1) lies, we quantize the virtual AoAs with resolution $\kappa_0$, i.e., $\vartheta_l^r\in\left\{-1+\frac{2\kappa-1}{\kappa_0}\right\}_{1\le\kappa\le\kappa_0}$. Without loss of generality, it is further assumed that the virtual AoAs are sorted in a non-decreasing order, i.e., $\vartheta_1^r\le\vartheta_2^r\le\ldots\le\vartheta_L^r$. Denote $\vartheta_l^r=-1+\frac{2\kappa_l-1}{\kappa_0}$, which means that $\vartheta_l^r$ corresponds to the $\kappa_l$-th element in the quantized set of virtual AoAs. Thus, the difference of any adjacent virtual AoAs can be expressed as $\vartheta_{l+1}^r-\vartheta_l^r=\frac{2}{\kappa_0}\left(\kappa_{l+1}-\kappa_l\right)\overset{\triangle}{=}\frac{2\mu_l}{\kappa_0},~1\le{l}\le{L}-1$. To guarantee $\frac{X}{\lambda}\left(\vartheta_a^r-\vartheta_b^r\right),~\forall a,b\in\left\{1,\ldots,L\right\}$ being integers, by denoting $\mu^\star$ as the maximal common factor for $\left\{\mu_l\right\}_{1\le{l}\le{L}-1}$, the minimum period $X$ of $y_2\!\left(x\right)$ is given by
\begin{equation}
	X=\frac{\kappa_0\lambda}{2\mu^\star},
\end{equation}
which indicates that the maxima of $y_2\!\left(x\right)$ can be achieved within any region where $x$ spans at least one period $X$. In light of the above, the optimal solution $x^\star$ to (P2.1) must lie in the interval\\ $\left[\max\!\left\{0,\min\!\left\{x^0\!-\!\frac{X}{2},A\!-\!X\right\}\!\right\}\!,\min\!\left\{A,\max\!\left\{x^0\!+\!\frac{X}{2},X\right\}\!\right\}\right]$ under the assumption of quantized virtual AoAs, i.e.,

\vspace{-8pt}
{\small
\begin{align}\label{eq_inter_multipath}
	x^\star\in\bigg[&\max\left\{0,\min\left\{x^0-\frac{\kappa_0\lambda}{4\mu^\star},A-\frac{\kappa_0\lambda}{2\mu^\star}\right\}\right\},\nonumber\\
	&\quad\quad\quad~~\min\left\{A,\max\left\{x^0+\frac{\kappa_0\lambda}{4\mu^\star},\frac{\kappa_0\lambda}{2\mu^\star}\right\}\right\}\bigg].
\end{align}}%
Since $y_2\!\left(x\right)$ is expected to attain its maxima when $x$ is as close to $x^0$ as possible, which enables a larger optimal value for (P2.1), we thus present a more favorable design for the initial antenna coordinate $x^0$ as follows:
\begin{equation}\label{eq_ini_pos_multhpath}
	\frac{\kappa_0\lambda}{4\mu^\star}\le{x^0}\le{A}-\frac{\kappa_0\lambda}{4\mu^\star}.
\end{equation}
This design further indicates that the quantization resolution $\kappa_0$ and the length $A$ of moving region should satisfy
\begin{equation}\label{eq_nesu_multipath}
	\frac{\kappa_0}{A}\le\frac{2\mu^\star}{\lambda}.
\end{equation}

\vspace{-15pt}
\subsection{Where to Move?}\label{subsec_single_antenna}
While (P2.2) is much simplified compared to (P2), it is still non-convex due to that the objective function is not concave w.r.t. $x$. To address this challenge, we explore an efficient algorithm to optimize MA position via the SCA method. By introducing slack variables $\left\{q,u,w\right\}$, (P2.2) can be converted to
\begin{subequations}\label{eq_P2.3}
	\begin{alignat}{2}
		\text{\rm(P2.3)}:\quad&\underset{x,q,u,w}{\max}\quad&&T\log_2\left(1+\frac{u}{\sigma^2}\right)-\frac{qw}{v}\label{eq_P2.3_a}\\
		&~~\mathrm{s.t.}&&u\le{y_3}\!\left(x\right),\label{eq_P2.3_b}\\
		&&&w\ge\log_2\left(1+\frac{y_3\!\left(x\right)}{\sigma^2}\right),\label{eq_P2.3_c}\\
		&&&q\ge{x}-x^0,\label{eq_P2.3_d}\\
		&&&q\ge-\left(x-x^0\right),\label{eq_P2.3_e}\\
		&&&\eqref{eq_P2_c},\nonumber
	\end{alignat}
\end{subequations}
where we define
\begin{equation}\label{eq_y3}
	y_3\!\left(x\right)\overset{\triangle}{=}2P_{\text{m}}\sum_{a=1}^{L-1}\!\sum_{b=a+1}^{L}\!\!\left|\mathcal{F}_{ab}\right|\cos\!\left(\frac{2\pi}{\lambda}\tilde{\theta}_{ba}^rx+\angle\mathcal{F}_{ab}\right)+P_{\text{m}}\mathcal{G}.
\end{equation}
The equivalence between problems (P2.3) and (P2.2) can be verified by contradiction. However, the resulting term $qw$ in the objective function and $y_3\!\left(x\right)$ in the constraints \eqref{eq_P2.3_b} and \eqref{eq_P2.3_c} are neither convex nor concave and thus remain intractable. With given local points $\left\{q^i,w^i\right\}$ in the $i$-th iteration, a convex upper-bound surrogate function for the term $qw$ can be constructed as [\citen{SunY_surrogate_function}, eq. (101)]
\begin{equation}\label{eq_qw_ub}
	qw\le\frac{1}{2}\left(\frac{w^i}{q^i}q^2+\frac{q^i}{w^i}w^2\right).
\end{equation}
Thus, the objective function of (P2.3) becomes concave w.r.t $\left\{q,u,w\right\}$. To proceed, we tackle the constraint \eqref{eq_P2.3_b}, whereas the non-convexity of $y_3\!\left(x\right)$ prevents us from constructing its global lower bound through the first-order Taylor expansion. However, a concave lower-bound surrogate function for $y_3\!\left(x\right)$ can be obtained by applying the second-order Taylor expansion. Specifically, introducing a positive real number $\delta^{\text{lb}}$ such that $\delta^{\text{lb}}\ge\frac{\operatorname{d}^2y_3\left(x\right)}{\operatorname{d}x^2}$, the following inequality holds:
\vspace{-2pt}
\begin{equation}\label{eq_y3_lb}
	\small
	\!\!y_3\!\left(x\right)\ge{y_3}\!\left(x^i\right)\!+\!\frac{\operatorname{d}y_3\!\left(x^i\right)}{\operatorname{d}x}\!\left(x\!-\!x^i\right)\!-\!\frac{\delta^{\text{lb}}}{2}\!\left(x\!-\!x^i\right)^2\overset{\triangle}{=}y_3^{\text{lb},i}\!\left(x\right)\!,
\end{equation}
which is derived by modifying [\citen{SunY_surrogate_function}, Lemma 12]. In \eqref{eq_y3_lb}, $x^i$ denotes the given local point in the $i$-th iteration and $\frac{\operatorname{d}y_3\left(x^i\right)}{\operatorname{d}x}$ is given by
\vspace{-2pt}
\begin{equation}
	\small
	\!\!\!\frac{\operatorname{d}y_3\!\left(x^i\right)}{\operatorname{d}x}\!=\!-\frac{4{\pi}P_{\text{m}}}{\lambda}\!\sum_{a=1}^{L-1}\!\sum_{b=a+1}^{L}\!\!\left|\mathcal{F}_{ab}\right|\tilde{\theta}_{ba}^r\sin\!\left(\!\frac{2\pi}{\lambda}\tilde{\theta}_{ba}^rx^i\!\!+\!\angle\mathcal{F}_{ab}\!\right)\!,
	\vspace{-2pt}
\end{equation}
Then, a feasible $\delta^{\text{lb}}$ can be selected by calculating the following formulas

\begin{algorithm}[t]
	\renewcommand{\algorithmicrequire}{\textbf{Input:}}
	\renewcommand{\algorithmicensure}{\textbf{Output:}}
	\caption{Antenna Position Design}
	\begin{algorithmic}[1]
		\label{Alg_1}
		\REQUIRE Threshold $\epsilon>0$ and initial values $\left\{x^0,q^0,u^0,w^0\right\}$.
		\ENSURE The optimized solution $x^\star$.
		\STATE Set iteration number $i=1$.
		\REPEAT
		\STATE Update $\left\{x^{[i]},q^{[i]},u^{[i]},w^{[i]}\right\}$ by solving (P2.4).
		\STATE Update $i=i+1$.
		\UNTIL{The fractional increase in objective values between two consecutive iterations drops below the threshold $\epsilon$.}
	\end{algorithmic}
\end{algorithm}
\setlength{\textfloatsep}{10pt}

\vspace{-25pt}
{\small
\begin{align}\label{eq_delta}
	\frac{\operatorname{d}^2y_3\!\left(x\right)}{\operatorname{d}x^2}&=-\frac{8{\pi^2}P_{\text{m}}}{\lambda^2}\sum_{a=1}^{L-1}\!\sum_{b=a+1}^{L}\!\left|\mathcal{F}_{ab}\right|\left(\tilde{\theta}_{ba}^r\right)^2\!\!\cos\!\left(\frac{2\pi}{\lambda}\tilde{\theta}_{ba}^rx\!+\!\angle\mathcal{F}_{ab}\right)\nonumber\\
	&\le\frac{8{\pi^2}P_{\text{m}}}{\lambda^2}\sum_{a=1}^{L-1}\sum_{b=a+1}^{L}\left|\mathcal{F}_{ab}\right|\left(\tilde{\theta}_{ba}^r\right)^2\overset{\triangle}{=}\delta^{\text{lb}}.
\end{align}}%
With \eqref{eq_y3_lb}, a convex subset of constraint \eqref{eq_P2.3_b} is given by
\begin{equation}\label{eq_P2.3_b_SCA}
	u\le{y_3^{\text{lb},i}}\!\left(x\right).
\end{equation}
The sole remaining hurdle in addressing (P2.3) pertains to the non-convex constraint \eqref{eq_P2.3_c}. Note that the inequality \eqref{eq_y3_lb} is not applicable as a convex upper bound for $y_3\!\left(x\right)$ needs to be constructed. Nevertheless, according to [\citen{SunY_surrogate_function}, Lemma 12], the following inequality holds:
\vspace{-1pt}
\begin{equation}\label{eq_y3_ub}
	\small
	\!\!\!y_3\!\left(x\right)\le{y_3}\!\left(x^i\right)\!+\!\frac{\operatorname{d}y_3\!\left(x^i\right)}{\operatorname{d}x}\!\left(x\!-\!x^i\right)\!+\!\frac{\delta^\text{ub}}{2}\!\left(x\!-\!x^i\right)^2\!\overset{\triangle}{=}y_3^{\text{ub},i}\!\left(x\right)\!,
\end{equation}
where $\delta^{\text{ub}}$ is a positive real number satisfying $\delta^{\text{ub}}\ge\frac{\operatorname{d}^2y_3\left(x\right)}{\operatorname{d}x^2}$ and can be determined by following similar steps in \eqref{eq_delta}. Subsequently, since the left-hand-side (LHS) of \eqref{eq_P2.3_c} can be transformed into $\left(\sigma^22^w-\sigma^2\right)$, which is convex w.r.t $w$, it is lower-bounded by its first-order Taylor expansion, i.e.,
\vspace{-1pt}
\begin{equation}\label{eq_w_lb}
	\sigma^22^w-\sigma^2\ge\sigma^22^{w^i}-\sigma^2+\sigma^22^{w^i}\left(w-w^i\right)\ln2,
\end{equation}
where $w^i$ represents the given local point in the $i$-th iteration. With \eqref{eq_y3_ub} and \eqref{eq_w_lb}, a convex subset of constraint \eqref{eq_P2.3_c} is given by
\vspace{-3pt}
\begin{equation}\label{eq_P2.3_c_SCA}
	\sigma^22^{w^i}-\sigma^2+\sigma^22^{w^i}\left(w-w^i\right)\ln2\ge{y_3^{\text{ub},i}}\!\left(x\right).
	\vspace{-1pt}
\end{equation}
By replacing $qw$, constraints \eqref{eq_P2.3_b} and \eqref{eq_P2.3_c} in (P2.3) with \eqref{eq_qw_ub}, \eqref{eq_P2.3_b_SCA} and \eqref{eq_P2.3_c_SCA}, respectively, a lower bound for the optimal objective value of (P2.3) can be acquired by solving
\begin{subequations}\label{eq_P2.4}
	\begin{alignat}{2}
		\!\!\text{\rm(P2.4)}:~&\underset{x,q,u,w}{\max}~~&&T\log_2\left(1+\frac{u}{\sigma^2}\right)-\frac{\left(\frac{w^i}{q^i}q^2+\frac{q^i}{w^i}w^2\right)}{2v}\label{eq_P2.4_a}\\
		&~~\mathrm{s.t.}&&\eqref{eq_P2_c},\eqref{eq_P2.3_d},\eqref{eq_P2.3_e},\eqref{eq_P2.3_b_SCA},\eqref{eq_P2.3_c_SCA}.\nonumber
	\end{alignat}
\end{subequations}
According to the above derivations, the non-convex problem (P2.2) is ultimately recast as the convex antenna position optimization problem (P2.4), which can be solved optimally by off-the-shelf solvers (e.g., CVX \cite{Boyd_CVX_book}). The proposed algorithm for (P2.4) is outlined in \textbf{Algorithm \ref{Alg_1}}.

\vspace{5pt}
\section{Multiuser System}\label{section4}
In the following, we focus on the general multiuser scenario, where the duration $t_2$ of the IT phase depends on the maximal moving delay among all $K$ users. To this end, we propose an efficient algorithm to cope with (P1) in an alternating manner. The problem of interest can be reformulated as
\begin{subequations}\label{eq_P3}
	\begin{alignat}{2}
		\text{(P3)}:\quad&\underset{\eta,\mathbf{x},\left\{\mathbf{w}_k\right\}_{k\in\mathcal{K}}}{\max}\quad&&\eta\label{eq_P3_a}\\
		&~~\quad\mathrm{s.t.}&&\tilde{C}_k\ge\eta,~{\forall}k\in\mathcal{K},\label{eq_P3_b}\\
		&&&\eqref{eq_P1_b},\eqref{eq_P1_c},\nonumber
	\end{alignat}
\end{subequations}
where $\eta$ is the introduced auxiliary variable representing the minimum achievable throughput among all users. Note that $\mathbf{x}$ and $\left\{\mathbf{w}_k\right\}_{k\in\mathcal{K}}$ are intricately coupled in the LHS of constraint \eqref{eq_P3_b}, which renders \eqref{eq_P3_b} to be non-convex and the resultant problem (P3) to be non-convex. This presents challenges in seeking the optimal solution. Then, by dividing (P3) into two subproblems, an alternating optimization (AO) algorithm is proposed, where the transmit beamforming and MA positions are updated alternately.

\subsection{Transmit Beamforming Design}\label{subsec_trans}
First, we consider the subproblem of transmit beamforming optimization with given MA positions. Introduce slack variables $\left\{\alpha_k,\beta_k\right\}_{k\in\mathcal{K}}$ such that
\begin{align}
	&e^{\alpha_k}=\frac{\sum_{j=1}^{K}\left|\mathbf{h}_k^H\mathbf{w}_j\right|^2}{\sigma_k^2}+1,~\forall{k}\in\mathcal{K},\label{eq_alpha}\\
	&e^{\beta_k}=\frac{\sum_{j=1,j{\ne}k}^{K}\left|\mathbf{h}_k^H\mathbf{w}_j\right|^2}{\sigma_k^2}+1,~\forall{k}\in\mathcal{K}.\label{eq_beta}
\end{align}
The LHS of \eqref{eq_P3_b} can be rewritten as $\tilde{T}_\text{m}\log_2e^{\left(\alpha_k-\beta_k\right)}$, where we define $\tilde{T}_\text{m}\overset{\triangle}{=}T-\underset{k\in\mathcal{K}}{\max}\left\{\frac{\left|x_k-x_k^0\right|}{v_k}\right\}$. Consequently, for any given MA positions $\mathbf{x}$, the subproblem of (P3) regarding to $\left\{\eta,\left\{\mathbf{w}_k,\alpha_k,\beta_k\right\}_{k\in\mathcal{K}}\right\}$ can be formulated as
\begin{subequations}\label{eq_P3.1}
	\begin{alignat}{2}
		\text{(P3.1)}:&\underset{\eta,\left\{\mathbf{w}_k,\alpha_k,\beta_k\right\}_{k\in\mathcal{K}}}{\max}&&~\eta\label{eq_P3_1_a}\\
		&\mathrm{s.t.}&&\hspace{-33pt}\tilde{T}_\text{m}\log_2e^{\left(\alpha_k-\beta_k\right)}\ge\eta,~{\forall}k\in\mathcal{K},\label{eq_P3_1_b}\\
		&&&\hspace{-35pt}\sum_{j=1}^{K}\left|\mathbf{h}_k^H\mathbf{w}_j\right|^2+\sigma_k^2\ge\sigma_k^2e^{\alpha_k},~{\forall}k\in\mathcal{K},\label{eq_P3_1_c}\\
		&&&\hspace{-43pt}\sum_{j=1,j{\ne}k}^{K}\!\!\!\left|\mathbf{h}_k^H\mathbf{w}_j\right|^2+\sigma_k^2\le\sigma_k^2e^{\beta_k},~{\forall}k\in\mathcal{K},\label{eq_P3_1_d}\\
		&&&\hspace{-34pt}\eqref{eq_P1_b}.\nonumber
	\end{alignat}
\end{subequations}
Note that constraints \eqref{eq_P3_1_c} and \eqref{eq_P3_1_d} are obtained by replacing the equality signs in \eqref{eq_alpha} and \eqref{eq_beta} with inequality signs. At the optimal solution to problem (P3.1), the constraints \eqref{eq_P3_1_c} and \eqref{eq_P3_1_d} must be active, since otherwise \eqref{eq_P3_1_b} can take the inequality sign by increasing $\left\{\alpha_k\right\}_{k\in\mathcal{K}}$ or decreasing $\left\{\beta_k\right\}_{k\in\mathcal{K}}$ such that the objective value can be further improved. However, the convexity of the LHS of \eqref{eq_P3_1_c} and the right-hand-side (RHS) of \eqref{eq_P3_1_d} results in the non-convexity of (P3.1). This prompts us to tackle the constraints w.r.t. $\left\{\mathbf{w}_k\right\}_{k\in\mathcal{K}}$ by applying the semidefinite relaxation (SDR) technique \cite{WuQQ_IRS_active_passive_trans,GaoY_activeIRS_SWIPT}. Specifically, denote $\mathbf{H}_k\overset{\triangle}{=}\mathbf{h}_k\mathbf{h}_k^H$ and define $\mathbf{W}_k\overset{\triangle}{=}\mathbf{w}_k\mathbf{w}_k^H$ for $k\in\mathcal{K}$, which needs to satisfy $\mathbf{W}_k\succeq\mathbf{0}$ and $\operatorname{rank}\left(\mathbf{W}_k\right)=1$. By utilizing $\mathbf{w}_k^H\mathbf{h}_k\mathbf{h}_k^H\mathbf{w}_k=\operatorname{tr}\left(\mathbf{H}_k\mathbf{W}_k\right)$ and relaxing the rank-one constraint on $\left\{\mathbf{W}_k\right\}_{k\in\mathcal{K}}$, (P3.1) is recast as
\begin{subequations}\label{eq_P3.2}
	\begin{alignat}{2}
		\text{(P3.2)}:&\underset{\eta,\left\{\mathbf{W}_k\in\mathbb{H}^N,\alpha_k,\beta_k\right\}_{k\in\mathcal{K}}}{\max}&&\eta\label{eq_P3_2_a}\\
		&\mathrm{s.t.}&&\hspace{-62pt}\sum_{j=1}^{K}\operatorname{tr}\left(\mathbf{H}_k\mathbf{W}_j\right)+\sigma_k^2\ge\sigma_k^2e^{\alpha_k},~{\forall}k\in\mathcal{K},\label{eq_P3_2_b}\\
		&&&\hspace{-70pt}\sum_{j=1,j{\ne}k}^{K}\!\!\!\!\operatorname{tr}\left(\mathbf{H}_k\mathbf{W}_j\right)+\sigma_k^2\le\sigma_k^2e^{\beta_k},~{\forall}k\in\mathcal{K},\label{eq_P3_2_c}\\
		&&&\hspace{-62pt}\sum_{k=1}^{K}\operatorname{tr}\left(\mathbf{W}_k\right)\le{P_{\text{m}}},\label{eq_P3_2_d}\\
		&&&\hspace{-61pt}\mathbf{W}_k\succeq\mathbf{0},~{\forall}k\in\mathcal{K},\label{eq_P3_2_e}\\
		&&&\hspace{-61pt}\eqref{eq_P3_1_b}.\nonumber
	\end{alignat}
\end{subequations}

Then, the remaining non-convex constraint \eqref{eq_P3_2_c} of problem (P3.2) can be approximated by its first-order Taylor expansion-based affine under-estimator, yielding
\begin{subequations}\label{eq_P3.3}
	\begin{alignat}{2}
		\text{(P3.3)}:&\underset{\eta,\left\{\mathbf{W}_k\in\mathbb{H}^N,\alpha_k,\beta_k\right\}_{k\in\mathcal{K}}}{\max}&&~\eta\label{eq_P3_3_a}\\
		&\hspace{-22pt}\mathrm{s.t.}&&\hspace{-93pt}\sum_{j=1,j{\ne}k}^{K}\!\!\!\!\operatorname{tr}\left(\mathbf{H}_k\mathbf{W}_j\right)+\sigma_k^2\le\sigma_k^2~\!y_4^{\text{lb},i}\!\left(\beta_k\right),~\!{\forall}k\in\mathcal{K},\label{eq_P3_3_b}\\
		&&&\hspace{-84pt}\eqref{eq_P3_1_b},\eqref{eq_P3_2_b},\eqref{eq_P3_2_d},\eqref{eq_P3_2_e},\nonumber
	\end{alignat}
\end{subequations}
where
\begin{equation}
	y_4^{\text{lb},i}\!\left(\beta_k\right)\overset{\triangle}{=}\beta_ke^{\beta_k^i}+\left(1-\beta_k^i\right)e^{\beta_k^i}
\end{equation}
with $\left\{\beta_k^i\right\}_{k\in\mathcal{K}}$ being the given local points in the $i$-th iteration. As problem (P3.3) is a convex semidefinite program (SDP), its optimal solution can be efficiently obtained by standard solvers.

\begin{remark}\label{remark}
	In the subsequent subsection, we will utilize the previously obtained $\left\{\mathbf{W}_k\right\}_{k\in\mathcal{K}}$ and incorporate them into all computations. Only after the final iteration will we recover the transmit beamforming vectors $\left\{\mathbf{w}_k\right\}_{k\in\mathcal{K}}$ based on $\left\{\mathbf{W}_k\right\}_{k\in\mathcal{K}}$, which ensures the convergence of the proposed algorithm. The discussion of recovering $\left\{\mathbf{w}_k\right\}_{k\in\mathcal{K}}$ from $\left\{\mathbf{W}_k\right\}_{k\in\mathcal{K}}$ will be presented at the end of the next subsection.
\end{remark}

\vspace{-10pt}
\subsection{Antenna Position Design}
With fixed Hermitian semidefinite matrices $\left\{\mathbf{W}_k\right\}_{k\in\mathcal{K}}$, and recalling the previous definitions $e^{\alpha_k}$, $e^{\beta_k}$, and $y_4^{\text{lb},i}\!\left(\beta_k\right)$ in Subsection \ref{subsec_trans}, the subproblem of (P3) regarding to $\mathbf{x}$ can be rewritten in a similar way as
\begin{subequations}\label{eq_P3.4}
	\begin{alignat}{2}
		\text{(P3.4)}:&~\underset{\mathbf{x},\eta,\zeta,\left\{\alpha_k,\beta_k\right\}_{k\in\mathcal{K}}}{\max}~&&\eta\label{eq_P3_4_a}\\
		&\hspace{-19pt}\mathrm{s.t.}&&\hspace{-69pt}\left(T-\zeta\right)\log_2e^{\left(\alpha_k-\beta_k\right)}\ge\eta,~{\forall}k\in\mathcal{K},\label{eq_P3_4_b}\\
		&&&\hspace{-66pt}\zeta\ge\frac{x_k-x_k^0}{v_k},~{\forall}k\in\mathcal{K},\label{eq_P3_4_c}\\
		&&&\hspace{-66pt}\zeta\ge-\frac{x_k-x_k^0}{v_k},~{\forall}k\in\mathcal{K},\label{eq_P3_4_d}\\
		&&&\hspace{-67pt}\sum_{j=1}^{K}\mathbf{h}_k^H\mathbf{W}_j\mathbf{h}_k+\sigma_k^2\ge\sigma_k^2e^{\alpha_k},~{\forall}k\in\mathcal{K},\label{eq_P3_4_e}\\
		&&&\hspace{-75pt}\sum_{j=1,j{\ne}k}^{K}\!\!\!\!\mathbf{h}_k^H\mathbf{W}_j\mathbf{h}_k+\sigma_k^2\le\sigma_k^2~\!y_4^{\text{lb},i}\!\left(\beta_k\right),~{\forall}k\in\mathcal{K},\label{eq_P3_4_f}\\
		&&&\hspace{-66pt}\eqref{eq_P1_c}.\nonumber
	\end{alignat}
\end{subequations}
Note that problem (P3.4) is non-convex and challenging to solve as variables $\left\{\zeta,\left\{\alpha_k,\beta_k\right\}_{k\in\mathcal{K}}\right\}$ are coupled in constraint \eqref{eq_P3_4_b} and $\mathbf{x}$ is not explicit in \eqref{eq_P3_4_e} and \eqref{eq_P3_4_f}, which motivates the development of the following algorithm. First, the LHS of \eqref{eq_P3_4_b} is not jointly concave w.r.t. $\zeta$ and $\left\{\alpha_k,\beta_k\right\}_{k\in\mathcal{K}}$ but satisfies [\citen{SunY_surrogate_function}, eq. (101) and (102)]
\begin{align}
	\zeta\alpha_k&\le\frac{1}{2}\left(\frac{\alpha_k^i}{\zeta^i}\zeta^2+\frac{\zeta^i}{\alpha_k^i}\alpha_k^2\right)\overset{\triangle}{=}y_5^{\text{ub},i}\!\left(\zeta,\alpha_k\right),\label{eq_y5}\\
	\zeta\beta_k&\ge\left(1+\ln\zeta+\ln\beta_k-\ln\zeta^i-\ln\beta_k^i\right)\zeta^i\beta_k^i\nonumber\\&\overset{\triangle}{=}y_6^{\text{lb},i}\!\left(\zeta,\beta_k\right).\label{eq_y6}
\end{align}
where $\zeta^i$ and $\left\{\alpha_k^i\right\}_{k\in\mathcal{K}}$ are the given local points in the $i$-th iteration. Then, a convex approximation of constraint \eqref{eq_P3_4_b} can be established immediately, given by
\begin{equation}\label{eq_P3_4_b_re}
	y_6^{\text{lb},i}\!\left(\zeta,\beta_k\right)\ge{y}_5^{\text{ub},i}\!\left(\zeta,\alpha_k\right)+\frac{\eta}{\log_2e}+\beta_kT-\alpha_kT.
\end{equation}

By expanding the term $\mathbf{h}_k^H\mathbf{W}_j\mathbf{h}_k$ in the LHS of both \eqref{eq_P3_4_e} and \eqref{eq_P3_4_f}, we derive the following reconstructed equation

\vspace{-10pt}
{\small
\begin{align}
	&y_{kj}\!\left(x_k\right)\overset{\triangle}{=}\mathbf{h}_k^H\mathbf{W}_j\mathbf{h}_k=\operatorname{tr}\left(\mathbf{A}_{kj}\right)+\nonumber\\
	&\quad~2\!\sum_{a=1}^{L_k-1}\!\!\sum_{b=a+1}^{L_k}\!\!\left|\mathbf{A}_{kj}\!\left(a,b\right)\right|\cos\!\left(\tilde{\vartheta}_{k,ba}^rx_k\!+\!\angle\mathbf{A}_{kj}\!\left(a,b\right)\right),
\end{align}}%
where we define
\begin{align}
	&\mathbf{A}_{kj}\overset{\triangle}{=}\mathbf{\Delta}_k^H\mathbf{G}_k\mathbf{W}_j\mathbf{G}_k^H\mathbf{\Delta}_k,\nonumber\\
	&\tilde{\vartheta}_{k,ba}^r\overset{\triangle}{=}\frac{2\pi}{\lambda}\left(\vartheta_{k,b}^r-\vartheta_{k,a}^r\right),~a,b\in\left\{1,\ldots,L\right\}.\nonumber
\end{align}
Following the same approach as in \ref{subsec_single_antenna}, the SCA method is employed to address the non-convex constraints \eqref{eq_P3_4_e} and \eqref{eq_P3_4_f} based on the formulas above. By applying the second-order Taylor expansion, we construct quadratic surrogate functions that serve as a global lower bound for $y_{kj}\!\left(x_k\right)$ in \eqref{eq_P3_4_e} and upper bound for $y_{kj}\!\left(x_k\right)$ in \eqref{eq_P3_4_f}. Specifically, given the local points $\left\{x_k^i\right\}_{k\in\mathcal{K}}$ obtained in the $i$-th iteration, the following inequality holds \cite{WangHH_interference_MA}, [\citen{SunY_surrogate_function}, Lemma 12]:
%
\begin{align}
	y_{kj}\!\left(x_k\right)&\ge{y}_{kj}^{\text{lb},i}\!\left(x_k\right)\nonumber\\
	&\hspace{-11pt}\overset{\triangle}{=}y_{kj}\!\left(x_k^i\right)\!+\!\frac{\operatorname{d}y_{kj}\!\left(x_k^i\right)}{\operatorname{d}x_k}\!\left(\!x_k\!-\!x_k^i\!\right)\!-\!\frac{\delta_{kj}}{2}\!\left(\!x_k\!-\!x_k^i\!\right)^2,\label{eq_y_kj_SCA_lb}\\
	y_{kj}\!\left(x_k\right)&\le{y}_{kj}^{\text{ub},i}\!\left(x_k\right)\nonumber\\
	&\hspace{-11pt}\overset{\triangle}{=}y_{kj}\!\left(x_k^i\right)\!+\!\frac{\operatorname{d}y_{kj}\!\left(x_k^i\right)}{\operatorname{d}x_k}\!\left(\!x_k\!-\!x_k^i\!\right)\!+\!\frac{\delta_{kj}}{2}\!\left(\!x_k\!-\!x_k^i\!\right)^2,\label{eq_y_kj_SCA_ub}
\end{align}
where $\delta_{kj}$ is the introduced positive real number such that $\delta_{kj}\ge\frac{\operatorname{d}^2y_{kj}\left(x_k\right)}{\operatorname{d}x^2}$, and $\frac{\operatorname{d}y_{kj}\left(x_k^i\right)}{\operatorname{d}x_k}$ is given by
\begin{equation}
	\footnotesize
	\frac{\operatorname{d}y_{kj}\!\left(x_k^i\right)}{\operatorname{d}x_k}=-2\!\sum_{a=1}^{L_k\!-\!1}\!\!\sum_{b=a\!+\!1}^{L_k}\!\!\tilde{\vartheta}_{k,ba}^r\!\left|\mathbf{A}_{kj}\!\left(a,b\right)\right|\sin\!\left(\!\tilde{\vartheta}_{k,ba}^rx_k^i\!+\!\angle\mathbf{A}_{kj}\!\left(a,b\right)\!\right).
\end{equation}
Then, a feasible auxiliary value $\delta_{kj}$ can be derived as

\vspace{-10pt}
{\footnotesize
\begin{align}
	\frac{\operatorname{d}^2\!y_{kj}\!\left(x_k\right)}{\operatorname{d}x^2}&=-2\!\sum_{a=1}^{L_k\!-\!1}\!\!\sum_{b=a\!+\!1}^{L_k}\!\!\tilde{\vartheta}{_{k,ba}^r}\!^2\!\left|\mathbf{A}_{kj}\!\left(a,b\right)\right|\!\cos\!\left(\tilde{\vartheta}_{k,ba}^rx_k\!+\!\angle\mathbf{A}_{kj}\!\left(a,b\right)\!\right)\nonumber\\
	&\le2\sum_{a=1}^{L_k-1}\!\!\sum_{b=a+1}^{L_k}\!\!\tilde{\vartheta}{_{k,ba}^r}\!^2\left|\mathbf{A}_{kj}\!\left(a,b\right)\right|\overset{\triangle}{=}\delta_{kj},~\forall{k,j}\in\mathcal{K}.
\end{align}}%
In this way, by tightening the LHS of \eqref{eq_P3_4_e} to its lower bound and that of \eqref{eq_P3_4_f} to upper bound, (P3.4) is then evidently reduced to the following convex optimization problem (P3.5) for the given local points $\left\{\mathbf{x}^i,\zeta^i\left\{\alpha_k^i,\beta_k^i\right\}_{k\in\mathcal{K}}\right\}$ obtained in the $i$-th iteration:
\begin{subequations}\label{eq_P3.5}
	\begin{alignat}{2}
		\text{(P3.5)}:&~\underset{\mathbf{x},\eta,\zeta,\left\{\alpha_k,\beta_k\right\}_{k\in\mathcal{K}}}{\max}~&&\eta\label{eq_P3_5_a}\\
		&\hspace{-21pt}\mathrm{s.t.}&&\hspace{-70pt}\sum_{j=1}^{K}y_{kj}^{\text{lb},i}\!\left(x_k\right)+\sigma_k^2\ge\sigma_k^2e^{\alpha_k},~\forall{k}\in\mathcal{K},\label{eq_P3_5_b}\\
		&&&\hspace{-78pt}\sum_{j=1,j\ne{k}}^{K}y_{kj}^{\text{ub},i}\!\left(x_k\right)+\sigma_k^2\le\sigma_k^2~\!y_4^{\text{lb},i}\!\left(\beta_k\right),~{\forall}k\in\mathcal{K},\label{eq_P3_5_e}\\
		&&&\hspace{-69pt}\eqref{eq_P1_c},\eqref{eq_P3_4_c},\eqref{eq_P3_4_d},\eqref{eq_P3_4_b_re}.\nonumber
	\end{alignat}
\end{subequations}
Similarly, convex optimization solvers, such as CVX \cite{Boyd_CVX_book}, can be utilized to solve (P3.5) optimally.


After the final iteration of the proposed algorithm, reminiscing about \textbf{Remark \ref{remark}}, the relaxed problem (P3.3) may not lead to a rank-one solution, i.e., $\operatorname{rank}\left(\mathbf{W}_k\right)\ne1,~\forall{k}\in\mathcal{K}$. This implies that the optimal objective value of (P3.3) only serves as an upper bound of (P3.1). Thus, additional steps are required to construct a rank-one solution $\left\{\mathbf{w}_k\right\}_{k\in\mathcal{K}}$ from the obtained higher-rank solution $\left\{\mathbf{W}_k\right\}_{k\in\mathcal{K}}$, the details of which can be found in \cite{WuQQ_IRS_active_passive_conf} and are therefore omitted here. By substituting the transmit beamforming vectors $\left\{\mathbf{w}_k\right\}_{k\in\mathcal{K}}$ sourced from the previous construction and the MA positions $\mathbf{x}$ obtained by solving (P3.5) in the last iteration back into (P3), we ultimately attain the optimized minimum achievable throughput $\eta$.

\subsection{Convergence and Computational Complexity Analysis}
\begin{algorithm}[t]
	\renewcommand{\algorithmicrequire}{\textbf{Input:}}
	\renewcommand{\algorithmicensure}{\textbf{Output:}}
	\caption{Joint Antenna Position and Beamforming Design}
	\begin{algorithmic}[1]
		\label{Alg_2}
		\REQUIRE Threshold $\epsilon>0$, initial values $\left\{\mathbf{x}^0,\zeta^0,\left\{\beta_k^0\right\}_{k\in\mathcal{K}}\right\}$.
		\ENSURE The optimized solution $\left\{\eta^\star,\mathbf{x}^\star,\left\{\mathbf{w}_k^\star\right\}_{k\in\mathcal{K}}\right\}$.
		\STATE Set iteration number $i=1$.
		\REPEAT
		\STATE Update $\left\{\mathbf{W}_k^i,\hat{\alpha}_k^i,\hat{\beta}_k^i\right\}_{k\in\mathcal{K}}$ by solving (P3.3) with $\left\{\mathbf{x}^{i-1},\left\{\beta_k^{i-1}\right\}_{k\in\mathcal{K}}\right\}$.
		\STATE Update $\left\{\mathbf{x}^i,\eta^i,\zeta^i,\left\{\alpha_k^i,\beta_k^i\right\}_{k\in\mathcal{K}}\right\}$ by solving (P3.5) with $\left\{\mathbf{x}^{i-1},\zeta^{i-1},\left\{\mathbf{W}_k^i,\hat{\alpha}_k^i,\hat{\beta}_k^i\right\}_{k\in\mathcal{K}}\right\}$.
		\STATE Update $i=i+1$.
		\UNTIL The fractional increase in objective values of problem (P3.5) between two consecutive iterations falls below the threshold $\epsilon$.
		\STATE Construct $\left\{\mathbf{w}_k^\star\right\}_{k\in\mathcal{K}}$ from $\left\{\mathbf{W}_k^{i-1}\right\}_{k\in\mathcal{K}}$ and obtain $\eta^\star$ by solving (P3) with fixed $\left\{\mathbf{x}^\star=\mathbf{x}^{i-1},\left\{\mathbf{w}_k^\star\right\}_{k\in\mathcal{K}}\right\}$.
	\end{algorithmic}
\end{algorithm}
\setlength{\textfloatsep}{10pt}
Based on the above derivations, the proposed algorithm for the multiuser system is summarized in \textbf{Algorithm \ref{Alg_2}}. Next, a proof of convergence is provided as follows. We define $\eta_1\left(\mathbf{x},\left\{\mathbf{W}_k,\alpha_k,\beta_k\right\}_{k\in\mathcal{K}}\right)$ and $\eta_2\left(\mathbf{x},\zeta,\left\{\mathbf{W}_k,\alpha_k,\beta_k\right\}_{k\in\mathcal{K}}\right)$ as the objective values of problems (P3.3) and (P3.5), respectively. Then, in the iterative steps $3-5$ of \textbf{Algorithm \ref{Alg_2}}, we have\looseness=-1
\begin{align}
	&\eta_1\left(\mathbf{x}^{i-1},\left\{\mathbf{W}_k^i,\hat{\alpha}_k^i,\hat{\beta}_k^i\right\}_{k\in\mathcal{K}}\right)\nonumber\\
	\overset{(a)}{=}~&\eta_2\left(\mathbf{x}^{i-1},\zeta^{i-1},\left\{\mathbf{W}_k^i,\hat{\alpha}_k^i,\hat{\beta}_k^i\right\}_{k\in\mathcal{K}}\right)\nonumber\\
	\overset{(b)}{\le}~&\eta_2\left(\mathbf{x}^i,\zeta^i,\left\{\mathbf{W}_k^i,\alpha_k^i,\beta_k^i\right\}_{k\in\mathcal{K}}\right)\nonumber\\
	\overset{(c)}{=}~&\eta_1\left(\mathbf{x}^i,\left\{\mathbf{W}_k^i,\alpha_k^i,\beta_k^i\right\}_{k\in\mathcal{K}}\right)\nonumber\\
	\overset{(d)}{\le}~&\eta_1\left(\mathbf{x}^i,\left\{\mathbf{W}_k^{i+1},\hat{\alpha}_k^{i+1},\hat{\beta}_k^{i+1}\right\}_{k\in\mathcal{K}}\right),
\end{align}
where $(a)$ holds since one of the constraints \eqref{eq_P3_4_c} or \eqref{eq_P3_4_d} in (P3.5) is active at given $\left\{\mathbf{x}^{i-1},\zeta^{i-1}\right\}$, i.e., $\zeta^{i-1}=\underset{k\in\mathcal{K}}{\max}\left\{\frac{\left|x_k^{i-1}-x_k^0\right|}{v_k}\right\}$, and \eqref{eq_y5}, \eqref{eq_y6} as well as the second-order Taylor expansions \eqref{eq_y_kj_SCA_lb} and \eqref{eq_y_kj_SCA_ub} are all tight at given $\left\{\mathbf{x}^{i-1},\zeta^{i-1},\left\{\mathbf{W}_k^i,\hat{\alpha}_k^i,\hat{\beta}_k^i\right\}_{k\in\mathcal{K}}\right\}$; $(b)$ and $(d)$ hold due to that $\left\{\mathbf{x}^i,\zeta^i,\left\{\mathbf{W}_k^i,\alpha_k^i,\beta_k^i\right\}_{k\in\mathcal{K}}\right\}$ and $\left\{\mathbf{x}^i,\left\{\mathbf{W}_k^{i+1},\hat{\alpha}_k^{i+1},\hat{\beta}_k^{i+1}\right\}_{k\in\mathcal{K}}\right\}$ are the optimal solutions to (P3.5) and (P3.3), respectively; $(c)$ holds since (P3.3) is feasible and has the same objective value as (P3.5) at given $\left\{\mathbf{x}^i,\left\{\mathbf{W}_k^i,\alpha_k^i,\beta_k^i\right\}_{k\in\mathcal{K}}\right\}$. Thus, we obtain a non-decreasing sequence of the objective values for (P3.5) by repeating steps $3-5$ in \textbf{Algorithm \ref{Alg_2}}. Additionally, it can be easily proved that the optimal objective value of (P3.5) is upper-bounded. Consequently, the convergence of \textbf{Algorithm \ref{Alg_2}} is assured.

The computational complexity of \textbf{Algorithm \ref{Alg_2}} mainly lies in solving the SDP problem (P3.3) and (P3.5). According to the complexity analyses in \cite{Polik_interior_point_book} and \cite{WangKY_complexity_analysis}, the computational costs for solving (P3.3) and (P3.5) are $\mathcal{O}\left(\ln\frac{1}{\epsilon}\sqrt{KN}K^{3}N^{6}\right)$ and $\mathcal{O}\left(\ln\frac{1}{\epsilon}\sqrt{K}K^{3}\right)$, respectively. Hereto, the computational complexity of \textbf{Algorithm \ref{Alg_2}} is approximately $\mathcal{O}\left(I\ln\frac{1}{\epsilon}K^{3.5}N^{6.5}\right)$, where $I$ and $\epsilon$ represent the number of iterations required for convergence and the prescribed solution accuracy, respectively.

\section{Numerical Results}\label{section5}
This section provides numerical examples to validate the effectiveness of our proposed algorithms. We consider a three-dimensional (3D) coordinate system where the BS equipped with $4\times4~\left(N=16\right)$ uniform planar array (UPA) is located at $\left(0,0,0\right)$ and the users equipped with linear MA arrays are randomly dispersed on a hemispherical plane with radius of $r_k=r=100~\text{meters~(m)},~\forall{k}\in\mathcal{K}$ from the BS. The channel model in \eqref{eq_channel_1} is adopted here, where the number of channel paths for each user is assumed to be identical, i.e., $L_k=L,~\forall{k}\in\mathcal{K}$. Under this condition, it is considered that the diagonal elements of PRM $\mathbf{\Delta}_k$ in \eqref{eq_channel_1} follow the CSCG distribution $\mathcal{CN}\!\left(0,\Gamma_k^2/L\right)$, where $\Gamma_k^2=\rho_0r^{-\xi_0}$ is the expected channel power gain of $\mathbf{h}_{k}$, $\rho_0=-42~\text{dB}$ represents the expected value of the average channel power gain at the reference distance of $1~\text{m}$, and $\xi_0=2.8$ denotes the path-loss exponent. To ensure the fairness of comparison, the total power of the diagonal elements in PRM is set to be the same for the channels with different numbers of paths, i.e., $\mathbb{E}\left\{\operatorname{tr}\left(\mathbf{\Delta}_{k}^H\mathbf{\Delta}_{k}\right)\right\}\equiv{\Gamma_k^2},~\forall{L}$. The elevation and azimuth AoDs/AoAs of the channel paths for each user are assumed to be independent and identically distributed variables, following a uniform distribution over $\left[0,\pi\right]$. The moving speeds and regions of receive MAs at users are identically set as $v_k=v$ and $\mathcal{A}_k=\mathcal{A}=\left[0,A\right],~\forall{k}\in\mathcal{K}$, respectively. Besides, the initial antenna positions are similarly set as $x_k^0=x^0=A/2$. Other adopted settings of simulation parameters are $\sigma_k^2=\sigma^2=-80~\text{dBm}$ and $\epsilon=10^{-4}$. The values of $K,~L,~A,~v,~T,~P_\text{m},~\kappa_0$ will be specified in the following simulations.

\subsection{Single-User System}
First, we consider a single-user scenario, i.e., $K=1$, with antenna moving speed $v=v_\text{1}=0.1~\text{m/s}$ and $v=v_\text{2}=0.2~\text{m/s}$. By varying the system setup, i.e., the number of channel paths $L$, the length of moving region $A$, and the duration of transmission block $T$, we investigate the throughput achieved within a restricted duration of transmission block, under the given maximum transmit power of BS $P_\text{m}=10~\text{dBm}$. The results obtained by our proposed algorithm in Section \ref{subsec_single_antenna} are termed as \textbf{Algorithm \ref{Alg_1}} and other benchmark schemes for comparison are defined as follows: 1) \textbf{Quantized}: according to Section \ref{subsec_analysis}, virtual AoAs at the user are quantized with a quantization resolution $\kappa_0=10$, based on which \textit{Algorithm \ref{Alg_1}} is utilized to optimize the MA position. 2) \textbf{Max SNR}: MA of the user is deployed at the position that maximizes channel power gain and the detailed optimization algorithm can be referred to \cite{GaoY_multicast_MA}. 3) \textbf{FPA}: the receive antenna of the user is fixed at the initial position and the entire duration of a transmission block is allocated to the IT phase.

In Fig. \ref{Fig_2}, we evaluate the convergence performance of \textit{Algorithm \ref{Alg_1}} and compare it with the method in \cite{GaoY_multicast_MA} of maximizing the user's SNR. The number of channel paths, the moving region size, and the transmission block duration are set as $L=10$, $A=2\lambda$, and $T=3~\text{s}$, respectively. As can be observed, the achievable throughput and channel power gain respectively attained by \textit{Algorithm \ref{Alg_1}} and method in \cite{GaoY_multicast_MA} both exhibit an upward trend with the increasing iteration index. The curves demonstrate particularly rapid growth in the early stage, stabilizing after approximately $30$ (\textit{Algorithm \ref{Alg_1}}) and $50$ (\textit{Max SNR} scheme) iterations. By applying our proposed algorithm, the achievable throughput of the user rises from $5.25~\text{bits/Hz}$ to $5.75~\text{bits/Hz}$, which suggests its superior performance over \textit{Max SNR} scheme, as will be specifically illustrated next.\looseness=-1

\begin{figure}[t]
	\centering
	\includegraphics[width=3.2in]{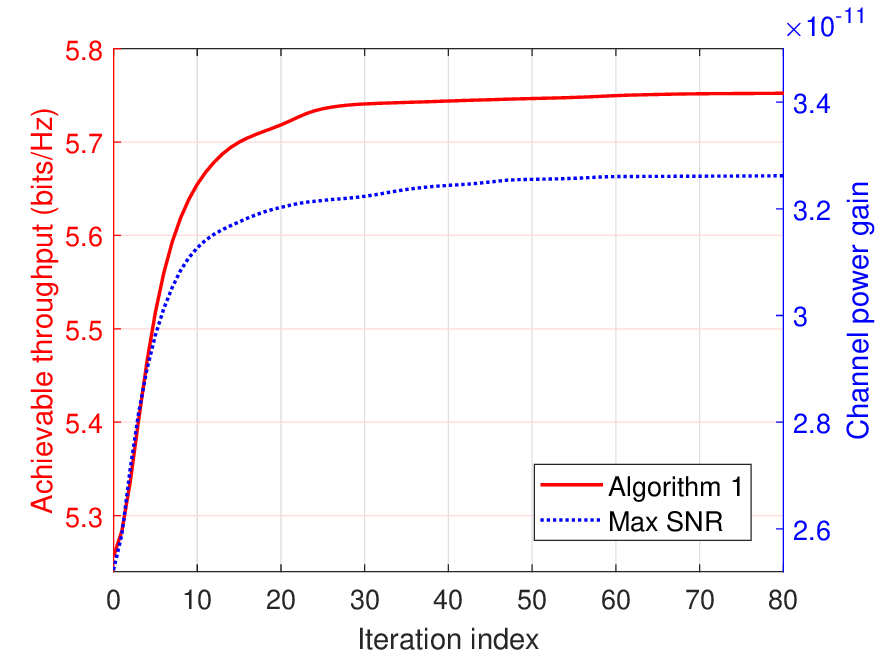}
	\caption{Convergence behaviors of \textit{Algorithms \ref{Alg_1}} and method in \cite{GaoY_multicast_MA}.}
	\label{Fig_2}
	\vspace{4pt}
	\centering
	\includegraphics[width=3.2in]{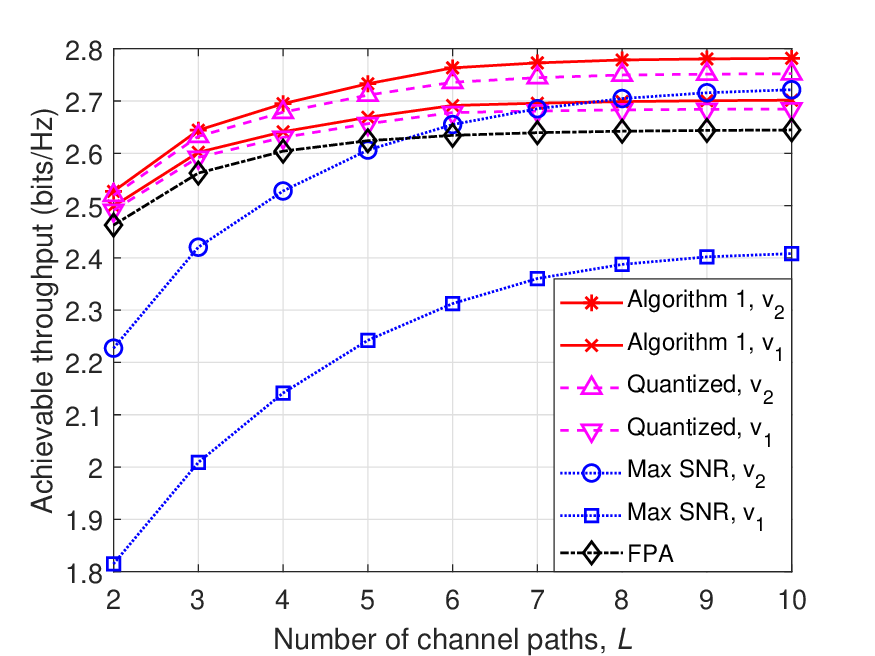}
	\caption{Achievable throughput versus number of channel paths.}
	\label{Fig_3}
	\vspace{4pt}
\end{figure}

In Fig. \ref{Fig_3}, we elaborate on the relationship between the achievable throughput and the number of channel paths and conduct comparative analyses of \textit{Algorithm \ref{Alg_1}} and the benchmark schemes. The parameters are set as $A=2\lambda$ and $T=1.5~\text{s}$. It is observed that with the increase in $L$, all the schemes, including \textit{FPA} scheme, perform an enhancement in achievable throughput. This improvement is ascribed to the heightened multi-path diversity gain with increased $L$. Note that \textit{Max SNR} scheme does not even achieve the throughput of \textit{FPA} scheme when $L\le5$, which is due to the optimized MA position in \textit{Max SNR} scheme only guarantees a high channel power gain but leads to a farther distance away from the initial position (compressing IT phase $t_2$ and therefore reducing the throughput). Additionally, the variation of antenna moving speed has a greater impact on \textit{Max SNR} scheme than \textit{Algorithm \ref{Alg_1}} and \textit{Quantized} scheme. This is attributed to the fact that faster speed allows AM delay $t_1$ to be shorter, mitigating the adverse effect caused by the squeezed IT duration, which reveals that \textit{Max SNR} scheme is speed-sensitive (this will be fully demonstrated in Fig. \ref{Fig_6}). As $L$ becomes larger, the small-scale fading undergoes more pronounced fluctuations in the receive region, making it easier for \textit{Max SNR} scheme to reach a local optimum of channel power gain within a short distance of antenna movement. Hence, \textit{Max SNR} scheme possesses a more arresting performance growth with $L$. Thanks to the thorough consideration of the trade-off between MA position corresponding to disparate channel conditions and AM delay, \textit{Quantized} scheme can approximate the throughput of perfect angle information while quantizing the virtual AoAs with a resolution of only $\kappa_0=10$. This indicates that \textit{Algorithm \ref{Alg_1}} does not require high accuracy of angle estimation, making it possible to deploy it in real systems.

\begin{figure}[t]
	\centering
	\includegraphics[width=3.2in]{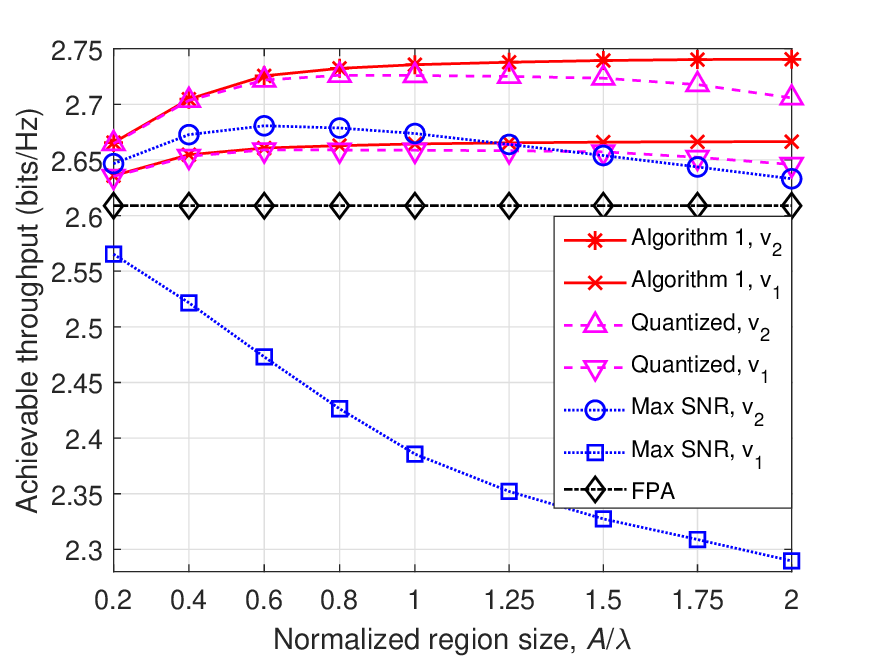}
	\caption{Achievable throughput versus normalized region size.}
	\label{Fig_4}
	\vspace{4pt}
	\centering
	\includegraphics[width=3.2in]{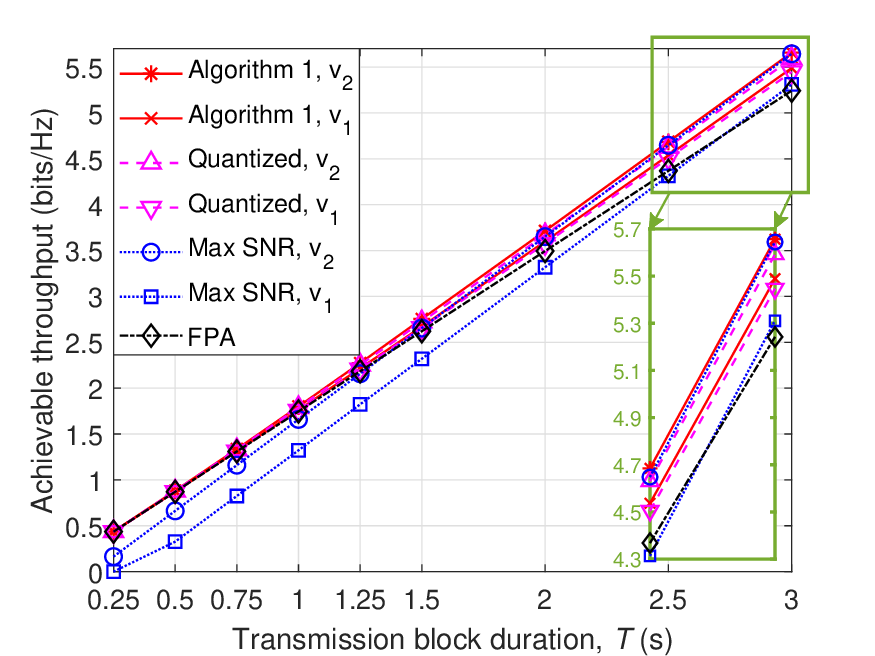}
	\caption{Achievable throughput versus transmission block duration.}
	\label{Fig_5}
	\vspace{4pt}
\end{figure}

Fig. \ref{Fig_4} depicts the achievable throughput obtained by different schemes versus the normalized region size for MA moving when $L=6$ and $T=1.5~\text{s}$, where the size of moving region is normalized by carrier wavelength, i.e., $A/\lambda$. Unlike the results of most MA works, the throughput of \textit{Max SNR} scheme that optimizes MA position only for channel power gain within a restricted transmission block duration does not grow all the time with moving region size. This is due to the larger the region for MA moving, the more likely it is that the optimized MA position in \textit{Max SNR} scheme falls at a coordinate farther away from the initial position. Thus, a longer distance of antenna movement will severely squeeze the IT duration and result in decreased throughput. This phenomenon becomes more significant as the antenna moving speed is low. In contrast, although \textit{Quantized} scheme also shows performance degradation when $A/\lambda\ge1.75$, it still approximates \textit{Algorithm \ref{Alg_1}} well, especially when $A/\lambda$ is small. From the observation, the throughput achieved by \textit{Algorithm \ref{Alg_1}}, \textit{Quantized}, and \textit{Max SNR} schemes at high speed $v_\text{2}$ with $A/\lambda\le0.6$ exhibits a monotonic growth, which results from the enhanced flexibility in optimizing MA position within enlarged region, leading to a potential performance improvement. However, beyond a certain threshold of $A/\lambda$, further increase of it does not bring any performance gains to \textit{Algorithm \ref{Alg_1}}, which suggests that the optimized achievable throughput is attainable within a relatively small receive region. Moreover, the throughput of \textit{Max SNR} scheme at low speed $v_\text{1}$ is consistently inferior to that of \textit{FPA} scheme, and the performance of \textit{Max SNR} scheme at high speed is gradually worse than that of \textit{Quantized} scheme at low speed. This indicates that \textit{Max SNR} scheme is not stable for different moving region sizes and is even not available if the antenna moving speed is low, whereas \textit{Algorithm \ref{Alg_1}} and \textit{Quantized} scheme are.

Fig. \ref{Fig_5} investigates the impact of transmission block duration on the achievable throughput of various schemes when $L=6$ and $A=2\lambda$. It is observed that the performance of \textit{Max SNR} scheme is extremely poor compared to other schemes with short duration $T$, and the throughput of that even drops to $0$ at low speed $v_\text{1}$. This is ascribed to the fact that the ultra-short duration of the transmission block cannot afford any AM delay, therefore no occurrence of antenna movement, i.e., \textit{FPA} scheme, is the optimal design in this case. Note that the curves of \textit{Max SNR} scheme gradually approach that of \textit{Algorithm \ref{Alg_1}}, which is owing to the proportion of AM delay within the entire transmission block reduces with increased $T$, leading to the convergence of solutions from \textit{Max SNR} scheme and \textit{Algorithm \ref{Alg_1}}. Moreover, since \textit{Algorithm \ref{Alg_1}} and \textit{Quantized} scheme are highly adaptive for different $T$, the throughput of which converges to that of \textit{FPA} scheme for small $T$ and gains performance boosts of up to approximately $8\%$ higher than \textit{FPA} scheme as $T$ increases (the boosts are expected to rise further when $T>3~\text{s}$). This fact corroborates the high robustness of \textit{Algorithm \ref{Alg_1}} to low accuracy of angle estimation. To sum up, the optimal schemes for ultra-short and ultra-long durations are respectively \textit{FPA} scheme and \textit{Max SNR} scheme, while \textit{Algorithm \ref{Alg_1}} incorporates the advantages of both two, maintaining excellent performance for arbitrary $T$.

\subsection{Multiuser System}
Next, we consider a multiuser system, i.e., $K>1$, with $L=6$, $A=2\lambda$, and $T=1.5~\text{s}$ and compare the minimum achievable throughput of our proposed algorithm in Section \ref{section4} that is termed as \textbf{Algorithm \ref{Alg_2}} with four benchmark schemes defined as follows: 1) \textbf{Quantized 1}: the virtual AoAs are quantized with a quantization resolution $\kappa_0=10$, based on which \textit{Algorithm \ref{Alg_2}} is utilized to optimize the MA positions. 2) \textbf{Quantized 2}: the quantization resolution is set as $\kappa_0=20$ and \textit{Algorithm \ref{Alg_2}} is employed similarly. 3) \textbf{Max min SINR}: the MAs are deployed at the positions that maximize their minimum SINR and the detailed optimization algorithm can be referred to \cite{GaoY_multicast_MA}. 4) \textbf{FPA}: the transmit beamforming is obtained by Section \ref{subsec_trans} and other settings are similar to the previous subsection. Besides, the number of random vectors used for Gaussian randomization is set to be $1000$.

\begin{figure}[t]
	\centering
	\includegraphics[width=3.2in]{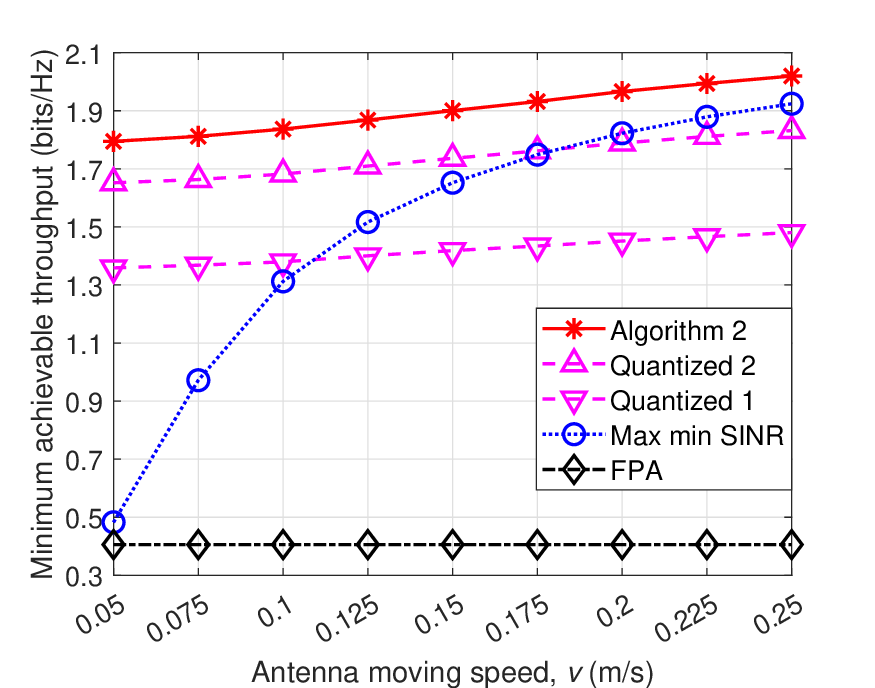}
	\caption{Minimum achievable throughput versus antenna moving speed.}
	\label{Fig_6}
	\vspace{7pt}
	\centering
	\includegraphics[width=3.2in]{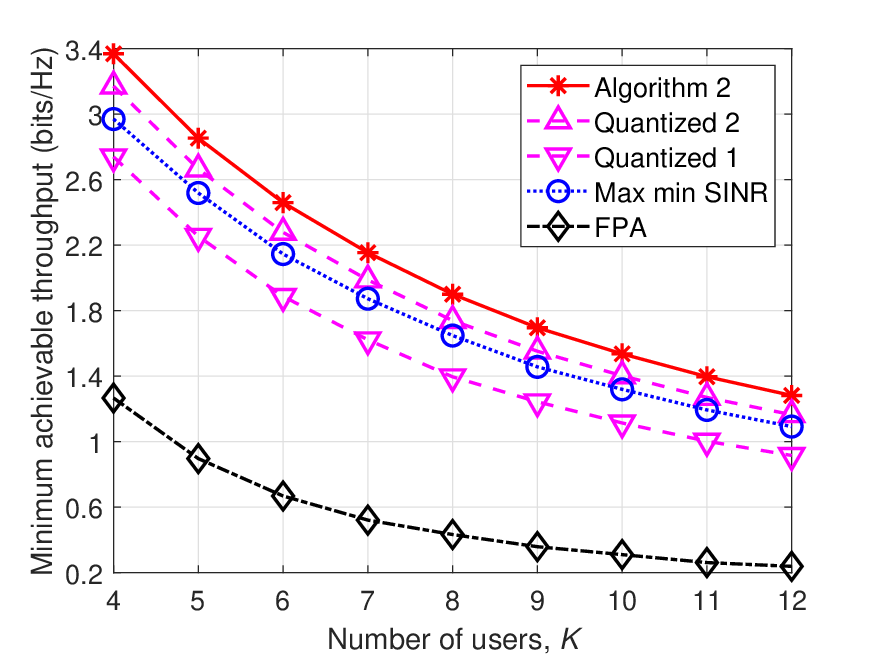}
	\caption{Minimum achievable throughput versus number of users.}
	\label{Fig_7}
	\vspace{4pt}
\end{figure}

In Fig. \ref{Fig_6}, we illustrate the minimum achievable throughput of various schemes versus the antenna moving speed, with parameters set to $K=8$ and $P_{\text{m}}=20~\text{dBm}$. It is evident that \textit{Algorithm \ref{Alg_2}}, \textit{Quantized 1}, and \textit{Quantized 2} schemes exhibit robust adaptability to varying speeds $v$, with their performance remaining relatively stable regardless of any speed. Furthermore, it is noteworthy that the quantization resolution $\kappa_0=10$ of virtual AoAs, as employed in the previous subsection, demonstrates relatively poor performance in multiuser scenarios. However, by merely increasing the resolution from $10$ to $20$, approximately $90\%$ of the performance achieved with perfect angle information can be attained. Consequently, \textit{Algorithm \ref{Alg_2}} exhibits commendable stability and adaptability w.r.t. antenna moving speed and angle estimation accuracy. In contrast, \textit{Max min SINR} scheme approximates the performance of \textit{Algorithm \ref{Alg_2}} only at high speeds, while at speeds below $0.1~\text{m/s}$, it is inferior to \textit{Quantized 1} scheme and even approaches the performance of \textit{FPA} scheme. This indicates that \textit{Max min SINR} scheme is highly sensitive to antenna moving speed, imposing stringent requirements on the movement units of MAs. Therefore, prior to the introduction of our proposed algorithm, existing designs were unable to accommodate MA-enabled communication systems with a limited transmission block duration.

Fig. \ref{Fig_7} plots the minimum achievable throughput obtained by different schemes versus the number of users when $v=0.15~\text{m/s}$ and $P_{\text{m}}=20~\text{dBm}$. It is observed that there is a significant decrease in the minimum achievable throughput with increasing $K$ for all schemes. This phenomenon is attributed to the heightened inter-user interference and diminished per-user transmit power. Nevertheless, irrespective of the scheme deployed, the communication performance of MA-aided systems markedly surpasses that of \textit{FPA} scheme, underscoring the efficacy of MA in significantly mitigating interference in multiuser systems. Furthermore, as the user count escalates, the performance enhancement of \textit{Algorithm \ref{Alg_2}} over \textit{FPA} scheme amplifies from an initial $166\%$ to $438\%$. Even for \textit{Quantized 1} scheme, with a virtual AoAs quantization resolution of merely $\kappa_0=10$, the performance improvement over \textit{FPA} scheme reaches $284\%$. By augmenting the resolution $\kappa_0$ from $10$ to $20$, our proposed algorithm demonstrates superior performance under quantized virtual AoAs compared to \textit{Max min SINR} scheme that necessitates precise angle information. This augmentation culminates in a $389\%$ increase in minimum achievable throughput relative to \textit{FPA} scheme.

\begin{figure}[t]
	\centering
	\includegraphics[width=3.2in]{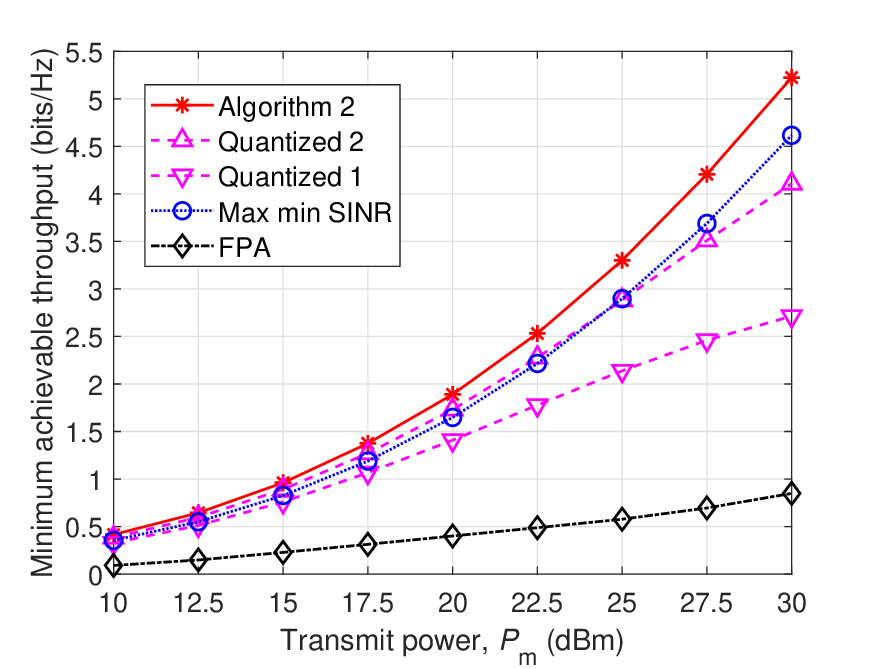}
	\caption{Minimum achievable throughput versus transmit power.}
	\label{Fig_8}
	\vspace{4pt}
\end{figure}

The effect of transmit power $P_\text{m}$ at BS on the minimum achievable throughput of different schemes is shown in Fig. \ref{Fig_8}. Other parameters are set as $K=8$ and $v=0.15~\text{m/s}$. With increased $P_\text{m}$, the performance of MA schemes exhibits a higher growth rate compared to \textit{FPA} scheme. This is expected as MAs reduce inter-user interference, allowing the increase in transmit power to significantly improve SINR at users. In contrast, increasing $P_\text{m}$ in FPA systems amplifies unexpected interference, resulting in less noticeable SINR improvement. Notably, at low transmit power, our proposed algorithm with a lower quantization resolution of virtual AoAs ($\kappa_0=10$) can achieve performance comparable to \textit{Algorithm \ref{Alg_2}} and \textit{Max min SINR} scheme, which utilize precise angle information. This indicates that \textit{Quantized 1} scheme, which requires only low accuracy in angle estimation, can be considered for deployment at low transmit power. As $P_\text{m}$ increases, the superior interference suppression enabled by accurate angle information becomes increasingly pronounced, leading to a more significant performance enhancement in \textit{Algorithm \ref{Alg_2}} and \textit{Max min SINR} scheme compared to others. Particularly, \textit{Algorithm \ref{Alg_2}} strikes an excellent balance between the channel conditions and AM delays, making its performance outstanding among all schemes.

\section{Conclusion}\label{section6}
In this paper, we modeled the minimum achievable throughput within a transmission block of limited duration, wherein the antenna moving delay was considered for the first time in the research field of MA. To maximize this throughput in MA-enabled multiuser communications, we jointly optimized MA positions and transmit beamforming. Despite the non-convex nature of the problem, characterized by highly coupled optimization variables, we analyzed the impact of antenna movement on communication performance and proposed a high-efficiency SCA-based algorithm. Initially tailored for the single-user scenario, this algorithm was then generalized to the multiuser context. Simulation results substantiated the effectiveness of the proposed algorithms, offering valuable engineering insights.

\bibliographystyle{IEEEtran}
\bibliography{bibfile}
\end{document}